%% LyX 2.3.6.1 created this file.  For more info, see http://www.lyx.org/.
%% Do not edit unless you really know what you are doing.
\documentclass[onecolumn,english,conference]{IEEEtran}
\pdfoutput=1
\usepackage[T1]{fontenc}
\usepackage[active]{srcltx}
\usepackage{color}
\usepackage{babel}
\usepackage{array}
\usepackage{float}
\usepackage{textcomp}
\usepackage{mathrsfs}
\usepackage{multirow}
\usepackage{amsmath}
\usepackage{amssymb}
\usepackage{stackrel}
\usepackage{graphicx}
\usepackage[unicode=true,
 bookmarks=true,bookmarksnumbered=true,bookmarksopen=true,bookmarksopenlevel=1,
 breaklinks=false,pdfborder={0 0 0},pdfborderstyle={},backref=false,colorlinks=false]
 {hyperref}
\hypersetup{pdftitle={Your Title},
 pdfauthor={Your Name},
 pdfpagelayout=OneColumn, pdfnewwindow=true, pdfstartview=XYZ, plainpages=false}

\makeatletter

%%%%%%%%%%%%%%%%%%%%%%%%%%%%%% LyX specific LaTeX commands.
%% Because html converters don't know tabularnewline
\providecommand{\tabularnewline}{\\}
\floatstyle{ruled}
\newfloat{algorithm}{tbp}{loa}
\providecommand{\algorithmname}{Algorithm}
\floatname{algorithm}{\protect\algorithmname}

%%%%%%%%%%%%%%%%%%%%%%%%%%%%%% User specified LaTeX commands.
% for subfigures/subtables
\IEEEoverridecommandlockouts
\ifCLASSOPTIONcompsoc
\usepackage[caption=false,font=normalsize,labelfont=sf,textfont=sf]{subfig}
\else
\usepackage[caption=false,font=footnotesize]{subfig}
\fi
\usepackage{cite}
\usepackage{amsthm}

\newtheorem{proposition}{\textbf{Proposition}}
\usepackage{algorithm}
\usepackage{amsmath}

\usepackage{cite}
\usepackage{bm}
\usepackage{algorithmic}
\usepackage{algorithm}
\usepackage{graphicx}
\renewcommand{\fnum@figure}{Fig.~\thefigure}
\interdisplaylinepenalty=2500
\IEEEoverridecommandlockouts

%%\newtheorem{theorem}{Theorem}
%%\newtheorem{lemma}{Lemma}
%%\newtheorem{corollary}{Corollary}
%%\newtheorem{definition}{Definition}
%%\newtheorem{remark}{Remark}
%%\newtheorem{proposition}{Proposition}

%%\IEEEoverridecommandlockouts
%%\linespread{1.67}
%%\renewcommand{\baselinestretch}{2}

%%\floatname{algorithm}{your name}

\makeatother

\begin{document}
\title{Cooperative Tri-Point Model-Based Ground-to-Air Coverage Extension
	in Beyond 5G Networks}
\author{\IEEEauthorblockN{Ziwei Cai, Min Sheng, Junyu Liu, Chenxi Zhao, and Jiandong Li}\\
	\IEEEauthorblockA{State Key Laboratory of Integrated Service Networks,
		Xidian University, Xi'an, Shaanxi, 710071, China\\
}}

\maketitle
\begin{abstract}
The utilization of existing terrestrial infrastructures to provide
coverage for aerial users is a potentially low-cost solution. However,
the already deployed terrestrial base stations (TBSs) result in weak
ground-to-air (G2A) coverage due to the down-tilted antennas. Furthermore,
	achieving optimal coverage across the entire airspace through antenna
	adjustment is challenging due to the complex signal coverage requirements in three-dimensional space, especially in the vertical direction. In this paper, we propose a cooperative
tri-point (CoTP) model-based method that utilizes cooperative beams
to enhance the G2A coverage extension. To utilize existing TBSs for establishing
effective cooperation, we prove that the cooperation among three TBSs
can ensure G2A coverage with a minimum coverage overlap, and design
the CoTP model  to analyze the G2A coverage
extension. Using the model, a cooperative coverage structure
based on Delaunay triangulation is designed to divide triangular prism-shaped
subspaces and corresponding TBS cooperation sets. To enable TBSs in the cooperation
set to cover different height subspaces while maintaining ground coverage,
we design a cooperative beam generation algorithm to maximize the
coverage in the triangular prism-shaped airspace. The simulation results
and field trials demonstrate that the proposed method can efficiently
enhance the G2A coverage extension while guaranteeing
ground coverage.

$\vphantom{}$

\begin{IEEEkeywords} G2A coverage extension; CoTP model; cooperative beam;\end{IEEEkeywords}
\end{abstract}

\section{introduction}
\label{s1}

With the popularity of aerial applications, including public safety
	patrol, cargo transport, and traffic monitoring, aerial user equipments
(AUEs) have been soaring globally over recent years \cite{ZengYong-uav-potential-challenge, Admission-Control}.
	Enabling AUEs-centric applications requires ubiquitous wireless connectivity
	in beyond-5G (fifth generation) networks. The terrestrial cellular network is a natural
	candidate for serving AUEs \cite{Amer}, since it can take advantage
	of existing ground infrastructures and coverage technologies to provide
	radio access services at low cost. However, it is challenging to efficiently extend coverage from the
ground to three-dimensional (3D) airspace by leveraging the current
terrestrial infrastructure due to the down-tilted antennas of
terrestrial base stations (TBSs) fail to meet the coverage demand in the vertical direction \cite{Cherif-ground-user-only}. Although antennas can be tilted upward,
it is still difficult to design the optimal beam to efficiently cover
3D airspace since it is a multi-objective optimization problem that
involves maximizing coverage while minimizing overlap \cite{Maeng}. Therefore,
	we intend to provide answers to the following questions for ground-to-air (G2A) coverage
	extension: 1) how to design the cooperative coverage structure based
	on the deployed TBSs, and 2) how to optimize antenna beam parameters
	of cooperative TBSs to implement cooperative airspace coverage.

\subsection{Related Work and Motivation}

The coverage enhancement methods of terrestrial cellular networks
have been widely studied over the past decades \cite{Liuyaxi,Balevi-beamforming-coverage-maxi,Traditional-Cell}. These studies have focused on investigating methods to optimize ground coverage by adjusting antennas. However, they ignore the coverage demand in the vertical direction, which is an important feature of the G2A
coverage extension \cite{Azari-G2A-different}. Regarding the 
G2A coverage, the use of a down-tilted antenna may lead to coverage holes. This is attributed to the fact that the down-tilted antenna primarily serves the airspace above it through its side lobes \cite{Cherif-beam-sidelobe, Capacity-evaluation}.

Considering the importance and distinctiveness of the G2A
	coverage, recent studies are exploring ways to enhance coverage for
	airspace. The first study attempting to provide
a comprehensive analysis of the G2A coverage extension can be traced
back to \cite{Azari-first-paper}, which outlines a potential pathway for achieving airspace coverage. 
	The authors of \cite{Mozaffari-6G-technogy} have investigated the
	potential and challenges of G2A coverage extension, highlighting that
	advanced antenna technologies have the ability to enhance the coverage of airspace. In \cite{Lyu}, the authors have offered a
link-level coverage analysis of the cellular network that serves both
AUEs and ground users. Meanwhile, the hand-off of AUEs and the coverage probability
of AUEs have been analyzed in \cite{LiYan-G2A-ComP}.
Although these studies provide the theoretical analysis of the link-level coverage ability of airspace based on stochastic
	geometry, they fail to provide a practical solution to enhance the
	G2A coverage extension.

%Regarding the design of network parameters for enhancing the coverage
%of 3D airspace, relevant research is scarce. A method for deploying
%aerial base stations (ABSs) based on the concept of truncated octahedral
%shapes have been proposed \cite{Mozaffari}, with the aim of ensuring
%continuous coverage for a given 3D airspace. Similar work using binomial-Delaunay
%tetrahedralization has been proposed in \cite{Yan}. Although methods by leveraging ABS
%have been proposed to enhance the coverage of 3D airspace, the high
%operational and maintenance costs, as well as the inherent mobility
%of ABSs, make it challenging to deploy them on a large scale in beyond
%5G networks.

With the advancement of 3D directional antenna technology, adjustable antenna parameters of TBS can be used to enhance the G2A coverage extension.
The ability to improve the coverage performance by adjusting the tilt
angle and beam-width of the antenna has been demonstrated in \cite{Maeng}. However, it fails to consider
	how to fulfill the coverage requirements across the entire airspace with different heights.
	For the deployed TBSs, the challenge in enhancing the G2A coverage stems from extending coverage into the airspace while ensuring ground coverage.
%In order to find the optimal
%parameters of antennas for G2A coverage extension, a optimizition
%algorithm is necessary. The PSO-based heuristic algorithms have been
%widely used to optimize the network parameters \cite{ZhaiDaosen-Coverage-PSO}.
%However, necessary customization and improvement must be made to the
%original PSO since it fail to directly apply to this specific problem,
%which is characterized by the multiple objective and coexistence of
%continuous and discrete variables. 

\subsection{Outcomes and Contributions }

Motivated by the above discussions, in this work, we propose a G2A coverage extension method based on
the cooperative tri-point (CoTP) model. The outcomes and contributions
of this paper can be summarized as follows:
\begin{itemize}
	\item  We prove the property of achieving minimum overlap
		through the cooperation coverage among three TBSs. Subsequently, a CoTP
		model is proposed to analyze the G2A coverage extension in existing
		cellular networks. Based on this, a Delaunay triangulation (DT)-based
		cooperative coverage structure is designed to seamlessly divide a given airspace
		into multiple triangular prism (TP)-shaped airspaces and formulate
		corresponding TBS cooperation sets. 
	\item  We design two cooperative beam generation algorithms
		to maximize the coverage ratio of each TP-shaped airspace while controlling
		overlap ratio. Especially, the space-layered beam cooperative (SLBC) coverage
		algorithm is designed to efficiently cover the low, medium, and high-layered
		subspaces by tuning the discrete antenna pattern. Moreover, the adaptive beam cooperative (ABC) coverage algorithm is designed
		to further enhance 3D coverage and prevent power leakage by generating
		beams with continuously adjustable bandwidth. 
	\item We conduct extensive simulations as well as field trials, which demonstrate that the proposed
	method can improve the coverage ratio of 3D airspace by 83\% compared
	to a cellular network that uses down-tilted antenna parameters, while also guaranteeing ground coverage.
\end{itemize}
The following sections of this paper are organized as follows. In
Section \ref{sec:System-Model}, we present the system model and introduce
performance metrics. In Section \ref{sec:Coverage-Maximization-Deployment},
we propose the CoTP-based G2A coverage extension method. The simulations
and field trials are presented in Section \ref{sec:NUMERICAL-RESULT}.
Finally, conclusions are drawn in Section \ref{sec:CONCLUSION}.

\section{System Model\label{sec:System-Model}}

%In this section, the network model, channel model, and antenna model
%are described. Then, the optimization problem for G2A coverage extension
%are formulated.

\subsection{Network Model}

We consider a cellular network for G2A coverage extension (see Fig.
\ref{fig:Cooperative-Tri-Point-Network}), which consists of TBSs
(with the same transmit power) denoted by $\Pi_{\textrm{BS}}=\{\textrm{BS}_{i}\mid i=1,2,\cdots I\}$
and sectors denoted by $\Pi_{\textrm{S}}=\{\textrm{S}{}_{ij}\mid i=1,2,\cdots I,j=1,2,\cdots,J\}$,
where $\textrm{S}{}_{ij}$ denotes the $j-$th sector of $i-$th
TBS. The sets of antennas of TBSs are denoted by $\Pi_{\textrm{A}}=\{\textrm{A}{}_{ij}\mid i=1,2,\cdots I,j=1,2,\cdots,J\}$.
Each sector $\textrm{S}{}_{ij}$ is equipped with one 3D directional
antenna $\textrm{A}{}_{ij}$. The horizontal half-power bandwidth
(H-HPBW), vertical half-power bandwidth (V-HPBW), and tilt angle of each antenna are denoted by $\epsilon=\{\Psi,\Phi,\Theta\}$. The 3D Cartesian coordinate of $\mathrm{BS_{\mathit{i}}}$
is denoted as $(x_{\textrm{BS}_{i}},y_{\textrm{BS}_{i}},z_{\textrm{BS}_{i}})$.
The directional antenna beams with different patterns can provide wireless coverage
for the low-altitude 3D airspace $\varOmega$. The maximal height
of $\varOmega$ is denoted by $h_{max}$. 

The 3D continuous airspace $\varOmega$ can be discretized by a regular
	cubic lattice \cite{grid-division}, which is denoted by $\mathbf{v}_{k}=(x_{k},y_{k},z_{k})$.  Specifically, the coverage ratio of the
	center point of lattice represents that of the entire
	lattice. The number of lattices in the airspace
$\varOmega$ is denoted by $N_{t}$. The tilt angle $\Theta$ determines the direction of the directional beam, thereby
affecting the received power of each lattice. 

%\begin{figure}[htbp]
\begin{figure}[t]
	\centering
	\includegraphics[width=0.42\textwidth]{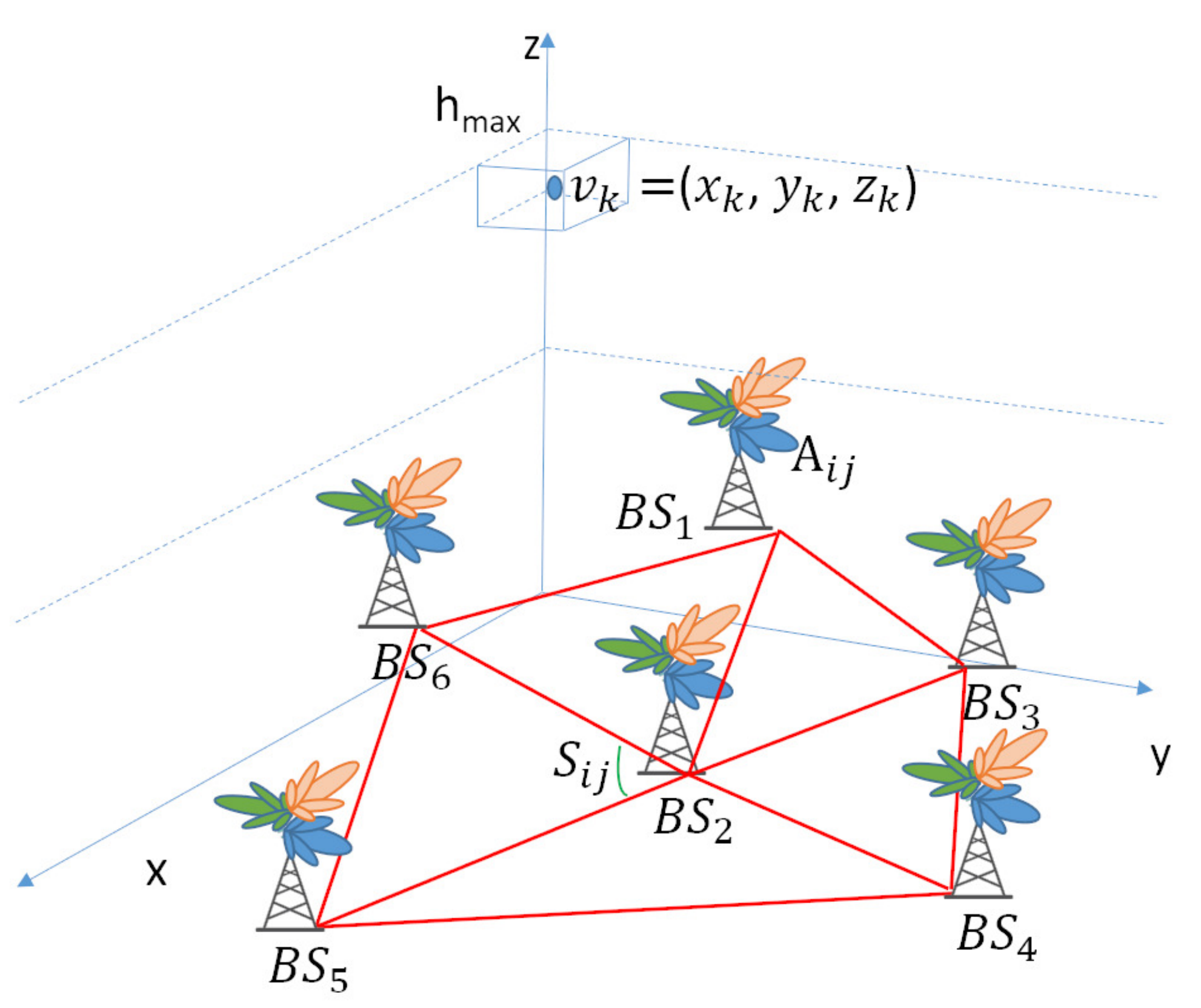}
	\caption{Illustration of the cellular network for G2A coverage extension.}
	\label{fig:Cooperative-Tri-Point-Network}
\end{figure}

\subsection{Channel Model and Antenna Model}
The G2A channel model of lattice $\mathbf{v}_{k}=(x_{k},y_{k},z_{k})$
is given by
\begin{equation}
	PL(d_{2D},h_{t})=p_{\textrm{L}}PL_{\textrm{L}}+p_{\textrm{N}}PL_{\textrm{N}},
\end{equation}
where the two-dimensional distance between the TBS and lattice $\mathbf{v}_{k}$
is given by $d_{2D}=\sqrt{(x_{v_{k}}-x_{\textrm{BS}_{i}})^{2}+(y_{v_{k}}-y_{\textrm{BS}_{i}})^{2}}$,
while the height difference between them is expressed as $h_{t}=\sqrt{(z_{v_{k}}-z_{\textrm{BS}_{i}})^{2}}$.
Moreover, $PL_{\textrm{L}}$ and $PL_{\textrm{N}}$ denote the path loss
in the line-of-sight (LOS) condition and the non-line-of-sight (NLOS)
condition, respectively. $p_{\textrm{L}}$ and $p_{\textrm{N}}$ denote the LOS
probability and the NLOS probability, respectively. The coefficients of the
channel model are given by the 3GPP technical report TR36.777 \cite{TR36.777}.

	The 3D directional antenna gain \cite{Junyu-antenna} is given by
\begin{equation}
		G(\varphi,\phi)\!=\!\begin{cases}
			\mathrm{G}_{0}/(\Psi\!\times\!\Phi),\!\!\!\!\!\! & -\Psi\!\leq\varphi\leq\!\Psi\!,\!-\Phi\!\leq\phi\leq\!\Phi,\\
			\mathrm{S}_{0}, & \mathrm{otherwise},
		\end{cases}\label{eq:antenna model}
\end{equation}
where $\Psi\in(0,\pi)$, $\Phi\in(0,\pi)$, $\mathrm{G}_{0}=2.2864$ and $\mathrm{S}_{0}=0.03$
	\cite{Anten-Side-lobe}.

\subsection{Performance Matrices and Problem Formulation}

The received power of lattice $\mathbf{v}_{k}=(x_{k},y_{k},z_{k})$ is given
by 

\begin{equation}
	P_{\textrm{BS}_{i}}(\mathbf{v}_{k})=P_{\textrm{T}}+G_{\textrm{BS}_{i}}(\varphi_{\mathbf{v}_{k}},\phi_{\mathbf{v}_{k}})+PL_{\textrm{BS}_{i}}(\mathbf{v}_{k}),
\end{equation}
where $P_{\textrm{T}}$ and $G_{\textrm{BS}_{i}}(\varphi_{\mathbf{v}_{k}},\phi_{\mathbf{v}_{k}})$
are the transmit power and the antenna gain of $\textrm{BS}_{i}$,
respectively, and $PL_{\textrm{BS}_{i}}(\mathbf{v}_{k})$ is the path loss of the
G2A channel.

In this paper, we employ the good coverage ratio (GCR) and the coverage overlap
ratio (COR) for evaluating the G2A coverage extension
\cite{Fagen-overlap}. Let $\xi$ denotes the GCR, we have 
\begin{equation}
	\xi=\frac{N_{\textrm{cov}}}{N_{t}},
\end{equation}
where $N_{\textrm{cov}}$ denotes the number of lattices with received
power that exceeds the received power threshold. $N_{\textrm{cov}}$
is given by 
\begin{align}
	N_{\textrm{cov}}=\sum_{k=1}^{N_{t}}C\left(\mathbf{v}_{k}\right),\label{eq:covera area}
\end{align}
where $C\left(\mathbf{v}_{k}\right)$ denotes whether the lattice $\mathbf{v}_{k}$ is covered
by at least one BS. $C\left(\mathbf{v}_{k}\right)$ is given by
\begin{equation}
	C(\mathbf{v}_{k})=\begin{cases}
		1, & P_{\textrm{BS}_{i}}(\mathbf{v}_{k})\geq\tau,\\
		0, & \mathrm{otherwise},
	\end{cases}
\end{equation}
where $\tau$ is the received power threshold. 

Let $\kappa$ denotes the COR, we have
\begin{equation}
	\kappa=\frac{N_{\textrm{over}}}{N_{t}},
\end{equation}
where $N_{\textrm{over}}$ denotes the number of lattices belonging
to the coverage overlap. $N_{\textrm{over}}$ is given by 

\begin{align}
	N_{\textrm{over}}=\sum_{k=1}^{N_{t}}O\left(\mathbf{v}_{k}\right),\label{eq:overlap area}
\end{align}
where $O\left(\mathbf{v}_{k}\right)$ denotes whether the lattice $\mathbf{v}_{k}$ is covered
by more than one BS. $O\left(\mathbf{v}_{k}\right)$ is given by
\begin{equation}
	\!O(v_{k})\!=\!\!\begin{cases}
		\!1,\!\!\!\!\!\! & \exists(\!P_{\textrm{BS}_{i}}(\mathbf{v}_{k})\!\geq\!\tau)\&(P_{\textrm{BS}_{j\neq i}}(\mathbf{v}_{k})\!\geq\!\textrm{\ensuremath{\tau}}\!),\\
		\!0,\!\!\!\!\!\! & \mathrm{otherwise}.
	\end{cases}
\end{equation}

In this paper, we aim at maximizing the coverage for 3D airspace while
controlling overlap. This objective can be transformed into a problem of maximizing coverage subject to overlap constraints. In particular, the optimization
problem can be formulated as 
\begin{align}
	\mathbf{P1}:\underset{\left\{ \Psi,\Phi,\Theta\right\} }{\textrm{max}} & \xi\label{eq:objetive-function}\\
	\textrm{C1}: & 0<\Psi<180^{o},\tag{10a}\\
	\textrm{C2}: & 0<\Phi<180^{o},\tag{10b}\\
	\mathit{\textrm{C3}:} & -90\leq\Theta\leq90^{o},\tag{10c}\\
	\mathit{\textrm{C4}:} & 0\leq\kappa\leq\mathsf{T},\tag{10d}\label{eq:overlap-constraints}
\end{align}
where $\mathsf{T}$ denotes the level of controlled overlap. In $\mathbf{\mathbf{P}1}$, C1-C3 limit the tuning range of the beam parameters, and
C4 limits the coverage overlap ratio of the given 3D airspace. It
is shown that problem $\mathbf{\mathbf{P}1}$ is a non-convex mixed integer
optimization problem.

\section{Coverage Maximization Scheme Based On CoTP Model\label{sec:Coverage-Maximization-Deployment}}

In this section, we prove the minimum coverage overlap property of the cooperation among three TBSs and design the CoTP coverage model. Based on this model,
	a coverage structure based on the Delaunay triangulation (DT) and
	the cooperative beam generation algorithm are proposed for G2A
	coverage extension.

\subsection{CoTP Coverage Model}
It is well-known that the square, triangle, and hexagon are regarded as potential coverage structures in the context of cellular network planning. To extend G2A coverage, three prisms
can be constructed by extending these three planar structures vertically,
including TP, square-prism (SP), and hexagonal-prism (HP). Then, we compare the airspace coverage of the three prism-shaped coverage structures.

\begin{proposition} 
	
	Let $\zeta_{1},\zeta_{2},\zeta_{3}$ denote the inter-region overlap ratio
	of the TP, SP, and HP coverage structures,
	respectively, with guaranteed seamless coverage. The inequality $\zeta_{1}<\zeta_{2}<\zeta_{3}$
	holds.
	
	\label{proposition: coverage overlap minimize}
	
\end{proposition}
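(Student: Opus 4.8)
The plan is to reduce the three-dimensional statement to a planar one. Since the TP, SP, and HP structures are obtained by extruding the equilateral triangle, square, and regular hexagon through the same vertical height, and the cooperating TBSs sit at the \emph{vertices} of the tiling (as forced by the DT construction, where three TBSs at the corners of a triangle cooperate), the vertical dimension is common to all three cases and cancels in any properly normalised overlap ratio. I would therefore define $\zeta_i$ through the planar coverage redundancy of vertex-centred coverage footprints, namely the average number of disks covering a point minus one, and prove the ordering for the three base tilings.

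First I would fix the smallest footprint radius that guarantees seamless coverage when the coverage sources sit at the corners of a regular $n$-gon tile of side $a$. The point hardest to reach is the one maximising the distance to its nearest corner; for a regular polygon this is the circumcentre, equidistant from all corners at the circumradius $R_n$. Hence the minimum radius is exactly $R_n$, with $R_3=a/\sqrt{3}$, $R_4=a/\sqrt{2}$, and $R_6=a$. I would then argue that a disk of radius $R_n$ at each corner covers not merely the corners and the centre but the whole tile, so that $R_n$ is both necessary and sufficient for seamless coverage.

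Next I would compute the density of cooperating TBSs. Each tile has $n$ corners, and in the triangular, square, and hexagonal tilings each corner is shared by $6$, $4$, and $3$ tiles respectively; hence the area associated with one TBS is $A_n=A_{\text{tile}}/(n/s)$, giving $A_3=\tfrac{\sqrt{3}}{2}a^2$, $A_4=a^2$, and $A_6=\tfrac{3\sqrt{3}}{4}a^2$. The coverage multiplicity is then $m_n=\pi R_n^2/A_n$, so $\zeta_i=m_n-1$. Substituting yields $m_3=\tfrac{2\pi}{3\sqrt{3}}\approx1.21$, $m_4=\tfrac{\pi}{2}\approx1.57$, and $m_6=\tfrac{4\pi}{3\sqrt{3}}\approx2.42$, whence $\zeta_1<\zeta_2<\zeta_3$ as claimed. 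The intuition is that the small circumradius of the triangle keeps the required footprints tiny, which more than compensates for the triangle's higher vertex density.

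The main obstacle is the sufficiency half of the radius claim: the circumcentre lies exactly on the boundary of every corner disk of radius $R_n$, so one must verify that no uncovered sliver survives near it and that the union of the $n$ corner disks contains the entire tile, not just its vertices and centre. I would handle this by partitioning the tile into the Voronoi cells of its corners and checking that the farthest point of each cell from its site is precisely the circumcentre. A secondary care point is the exact normalisation of the overlap ratio, where I must confirm that the chosen definition is the one in which the common vertical extrusion cancels, so that the planar computation indeed governs $\zeta_1<\zeta_2<\zeta_3$ for the physical prisms.
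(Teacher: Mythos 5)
Your arithmetic is correct and the ordering you obtain is the right one, but you have taken a genuinely different route from the paper, and the difference is worth spelling out because it touches the modelling assumptions behind the CoTP construction. The paper does not work with the minimum union-covering radius or with mean coverage multiplicity. It fixes the physical coverage radius $r$ of a TBS (modelled as a hemispherical/cone-shaped footprint), sizes each tile so that a single TBS at a vertex reaches the \emph{farthest vertex} of its own tile ($d_1=r$, $d_2=\sqrt{2}r/2$, $d_3=r/2$), and defines the inter-region overlap as the 3D volume of signal spilling \emph{outside} the prism, computed as spherical caps beyond the far faces; this yields $\zeta_1\approx 0.37\,r/H$, $\zeta_2\approx 1.74\,r/H$, $\zeta_3\approx 4.23\,r/H$. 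Two of your assumptions therefore do not match the intended model. First, your seamless-coverage criterion (radius equal to the circumradius $R_n$, so that the \emph{union} of the corner disks covers the tile) implicitly partitions the tile horizontally among the three TBSs; in the CoTP model the three TBSs partition the prism \emph{by height}, so each antenna must individually span the entire horizontal cross-section, which forces the larger farthest-vertex radius. Second, your $\zeta_i=m_n-1$ is the total covering redundancy (intra- plus inter-region), not the paper's spillover-outside-the-region quantity, and your claim that the vertical dimension "cancels" is not literally true in the paper's model, where the spillover is a spherical cap and $\zeta_i$ carries the factor $r/H$ (though this factor is common to all three structures, so it does not affect the ordering). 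That said, your argument is robust to the first objection: redoing your density computation with the farthest-vertex radii ($a$, $\sqrt{2}a$, $2a$) gives multiplicities $2\pi/\sqrt{3}<2\pi<16\pi/(3\sqrt{3})$, so the ordering survives. What your approach buys is a clean, dimensionless, classical lattice-covering-density comparison (the numbers $2\pi/\sqrt{27}$, $\pi/2$, $4\pi/\sqrt{27}$ are standard) with a rigorous sufficiency argument via the Voronoi decomposition of the tile; what the paper's approach buys is fidelity to the height-layered cooperation model and an explicit dependence on $r/H$, at the cost of a somewhat heuristic "rotate the arched region into a spherical segment" step. If you want your version to stand as a proof of the proposition as the paper uses it, you should adopt the farthest-vertex radius and either adopt the spillover definition or justify why total redundancy is the right surrogate.
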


\begin{proof}
	\label{proof2}
	When a TBS is equipped with directional antennas, its coverage
	takes on an approximately cone-shaped space. The coverage radius of
	the TBS is denoted as $r$, while $d_{1}$, $d_{2}$, and $d_{3}$ denote
	the  inter-site distance between TBSs for the TP, SP, and HP coverage structures, respectively.
	Considering the macro TBS, the height of the low-altitude space
	is denoted as $H$, where $0<H<r$. The inter-region coverage overlap ratio is defined as the ratio of the TBS signal coverage volume exceeding the coverage structure's range to the total volume.
	
	\begin{figure}[t]
		\centering
		\includegraphics[width=0.4\textwidth]{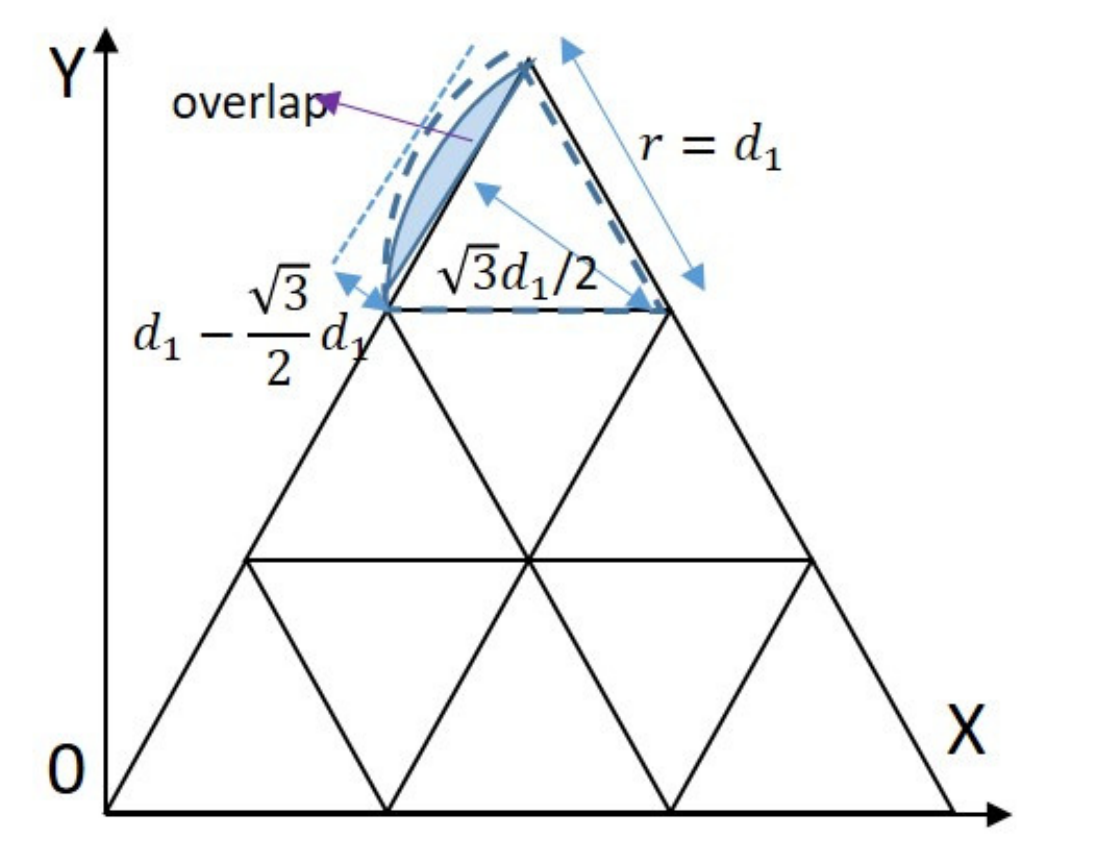}
		\caption{Illustration of the TP coverage structure. }
		\label{fig:Illustration-of-triangle-shaped}
	\end{figure}
	
	The vertical view of the TP structure is shown in Fig. \ref{fig:Illustration-of-triangle-shaped}.
	The area of the regular triangular region with side length $d_{1}=r$
	is given by $s_{1}=\sqrt{3}d_{1}^{2}/4$. The volume of the TP airspace is
	given by $V_{1}=s_{1}H.$ The blue arched region represents the inter-region
	overlap. As it rotates in 3D space, the arched region becomes a spherical
	segment. The volume of a spherical segment is given by $V_{11}=\pi(d_{1}-\sqrt{3}d_{1}/2)^{2}(r-(d_{1}-\sqrt{3}d_{1}/2)/3)$. Then, $\zeta_{1}$ is given by
	
	\begin{align}
		\zeta_{1}= & \frac{3V_{11}}{V_{1}}=\frac{(16\sqrt{3}-27)\pi r}{6H}\thickapprox0.37r/H.
	\end{align}
	
	The vertical view of the SP structure is shown in Fig. \ref{fig:Illustration-of-quadrangular-sha}.
	The area of the regular quadrangular with side length $d_{2}=\sqrt{2}r/2$
	is given by $s_{2}=d_{2}^{2}$. The volume of the SP airspace is
	given by $V_{2}=s_{2}H.$ The
	blue arched region contains the overlapped and non-overlapped parts. The volume of the spherical segment is given by $V_{21}=\pi(\sqrt{2}d_{2}-d_{2})^{2}(r-(\sqrt{2}d_{2}-d_{2})/3)$. In the blue region, the red triangular belongs to the non-overlapped part, forming a cone in the 3D airspace. The volume of the cone is given by $V_{22}=\pi((\sqrt{2}d_{2}-d_{2})^{2}(\sqrt{2}d_{2}-d_{2})/3)$. Then, $\zeta_{2}$ is given by

	\begin{figure}[t]
		\centering
		\includegraphics[width=0.41\textwidth]{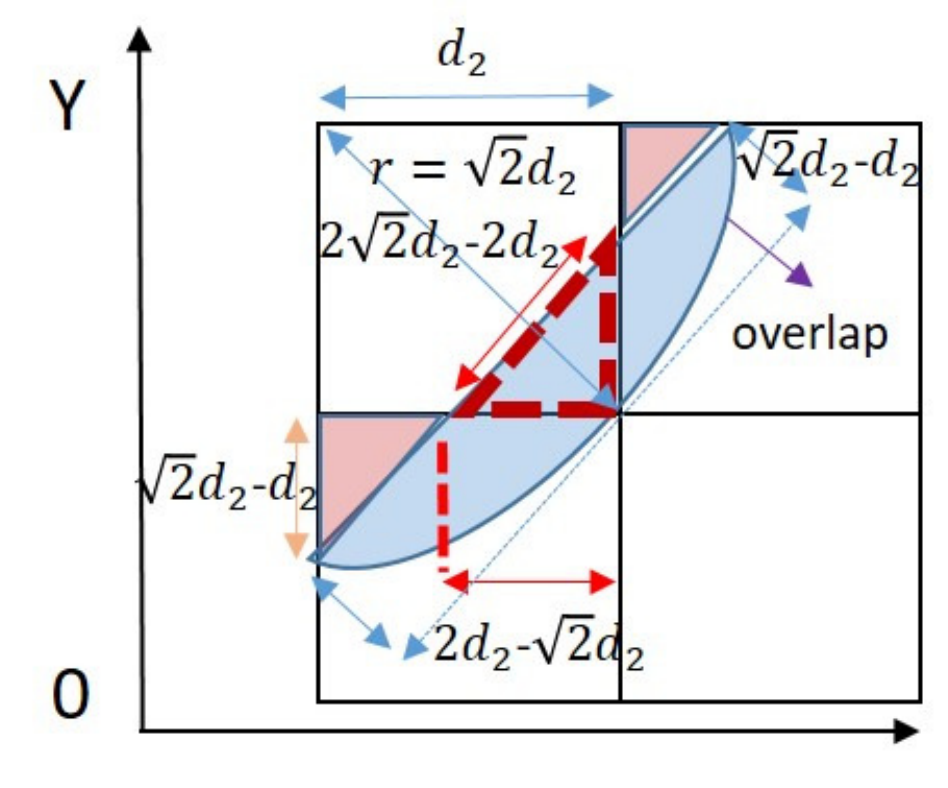}
		\caption{Illustration of the SP coverage structure. }
		\label{fig:Illustration-of-quadrangular-sha}
	\end{figure}
	
	\begin{align}
		\zeta_{2}\!= & \frac{4(V_{21}\!-\!V_{22})}{V_{2}}\!=\!\frac{4(\sqrt{2}\!-\!1)\pi r}{3H}\!\thickapprox\!1.74r/H.
	\end{align}
	
	The vertical view of the HP structure is shown in Fig. \ref{fig:Illustration-of-hexagonal-shaped}.
	The area of the regular hexagonal with side length $d_{3}=r/2$,
	is given by $s_{3}=3\sqrt{3}d_{3}^{2}/2$. The volume of the HP airspace is
	given by $V_{3}=s_{3}H.$ The
	blue arched region contains the overlapped and non-overlapped parts. The volume of the spherical segment is given by  $V_{31}=\pi(d_{3})^{2}(r-d_{3}/3)$. In the blue region, the red triangular and green rectangular belong to the non-overlapped parts, forming a cone  and a cylinder in the 3D space, respectively. The volume of the red cone is given by $V_{32}=\pi((\sqrt{3}d_{3}/2)^{2}(d_{3}/2))$. The
	volume of the green cylinder is given by $V_{33}=\pi((\sqrt{3}d_{3}/2)^{2}(d_{3}/6))$. Then, $\zeta_{3}$ is given by

	\begin{figure}[t]
		\centering
		\includegraphics[width=0.41\textwidth]{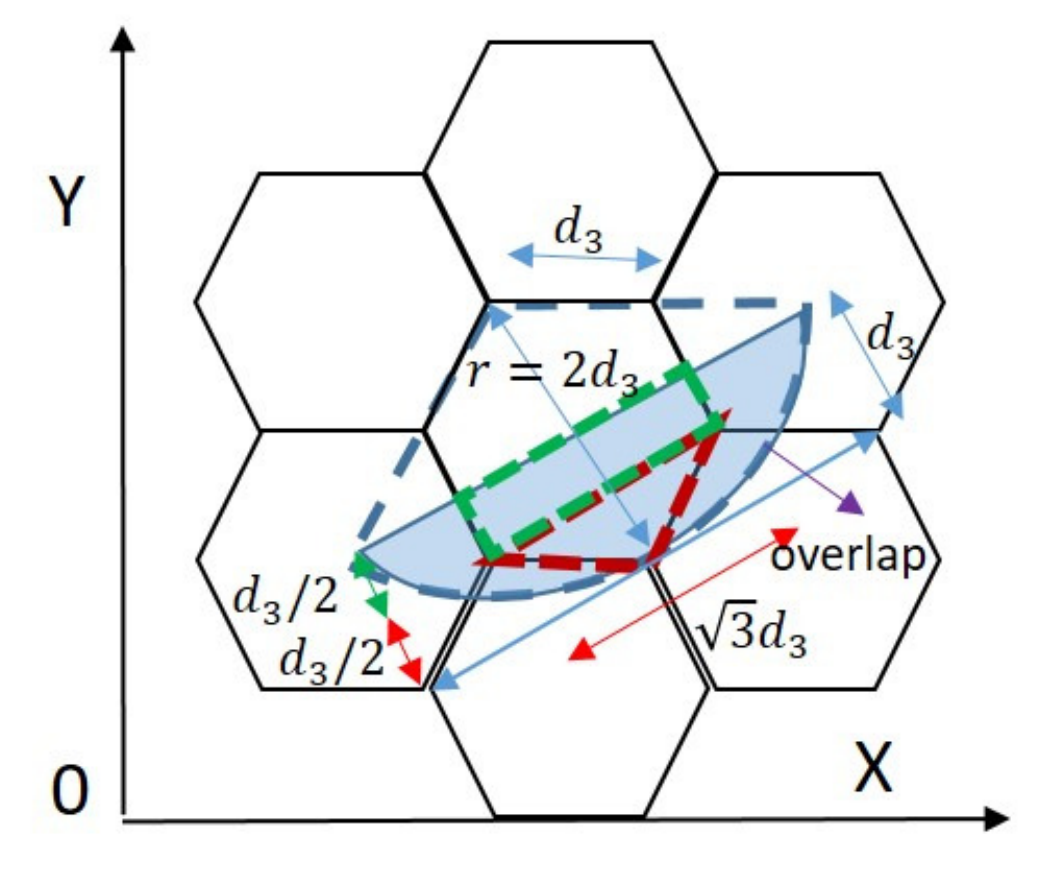}
		\caption{Illustration of the HP coverage structure. }
		\label{fig:Illustration-of-hexagonal-shaped}
	\end{figure}
	
	\begin{align}
		\zeta_{3}\!= & \frac{6(V_{31}\!-\!V_{32}\!-\!V_{33})}{V_{3}}\!=\!\frac{7\pi r^{3}}{3\sqrt{3}r^{2}H}\!\thickapprox\!4.23r/H.
	\end{align}
	
	Hence, we can obtain the inequality $\zeta_{1}<\zeta_{2}<\zeta_{3}$. 
\end{proof}

The conclusion drawn from Proposition \ref{proposition: coverage overlap minimize} is that the cooperation between three TBSs can obtain a minimum coverage overlap while ensuring seamless coverage.
	Therefore,
	we propose the CoTP model, which is based on cooperation among
	three TBSs.
As shown in Fig. \ref{fig:CoTP-Coverage-model}, all TBSs deployed
on plane $\mathrm{P}$ are clustered into multiple TBS sets, and each set includes three TBSs, such as $\{\textrm{BS}_{1},\textrm{BS}_{2},\textrm{BS}_{3}\}$.
The triangle-shaped region $\mathrm{P}_{1}$ is created by connecting
each pair of BSs in $\{\textrm{BS}_{1},\textrm{BS}_{2},\textrm{BS}_{3}\}$.
Correspondingly, the given region $\mathsf{P}\mathbb{\in R^{\mathbf{\textrm{2}}}}$
is divided into multiple triangular-shaped regions. A TP-shaped airspace $\mathrm{\varOmega}_{1}$ is formed by extending
$\mathrm{P}_{1}$ along the vertical direction. Then, the airspace
$\varOmega_{1}$ is divided into three subspaces, including the low-layer,
the medium-layer, and the high-layer, denoted as $\{\textrm{L}_{1},\textrm{L}_{2},\textrm{L}_{3}\}$
and served by the antennas of $\{\textrm{BS}_{1},\textrm{BS}_{2},\textrm{BS}_{3}\}$
with different parameters $\{\epsilon_{1},\epsilon_{2},\epsilon_{3}\}$,
respectively. 

%\begin{figure}[htbp]
\begin{figure}[t]
	\centering
	\includegraphics[width=0.41\textwidth]{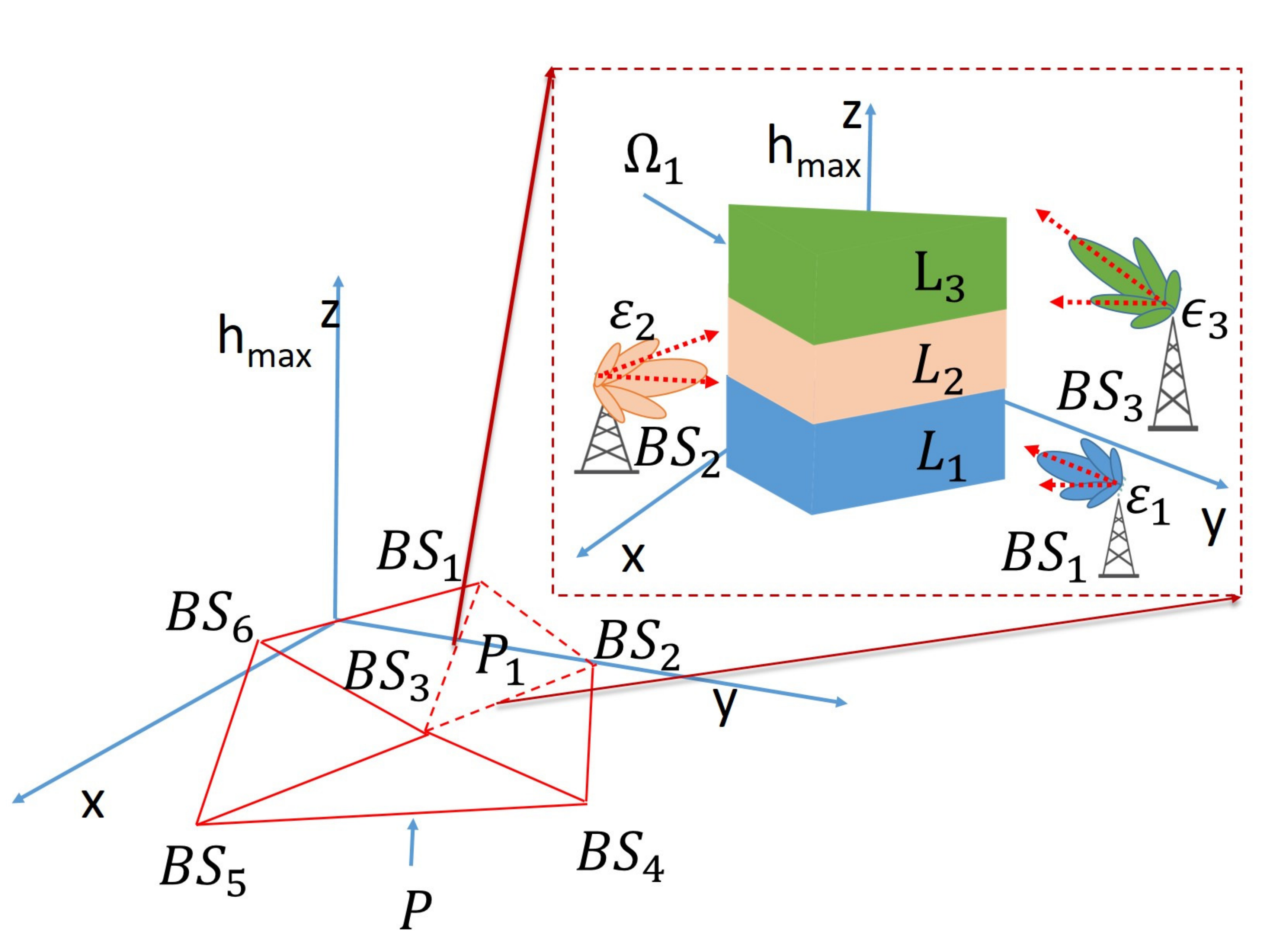}
	\caption{Illustration of the CoTP model.}
	\label{fig:CoTP-Coverage-model}
\end{figure}

\subsection{Coverage Structure Based on Delaunay Triangulation}

Based on the CoTP model, we design the Delaunay triangulation (DT)-based
cooperative coverage structure to enhance G2A coverage extension. The DT satisfies
both the empty circumcircle (EC) criterion and the maximized minimum
internal angle (MIAM) criterion \cite{Okabe-tessellations}, and its
excellent coverage ability has been extensively verified \cite{DT-optimization-Abedi}.
%In terms of the airspace coverage, the EC criterion provides guarantee
%for seamless plane division, while the MIAM criterion avoids irregular
%triangles with extremely small interior angles. Obviously, it is diffcult
%to efficiently cover the irregular triangle with controlled overlap,
%which is confirmed by our simulation results in Table \ref{tab:The-Coverage-Ratio-1}.
Furthermore, the DT-based coverage structure provides an advantage
in controlling intra-region overlap.

\begin{proposition} 
	
	If the antenna radiation radius of three TBSs located at the vertices
	of a triangle are equal, and the seamless coverage of a triangle-shaped
	region is guaranteed, then the ratio of intra-region overlap decreases
	as the length of the longest edge of the triangle decreases. 
	
	\label{proposition: overlap minimize}
	
\end{proposition}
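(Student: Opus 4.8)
The plan is to model each TBS's coverage as a disk of radius $r$ centered at the corresponding triangle vertex, so that the three equal radiation radii give three congruent disks $D_1,D_2,D_3$. Seamless coverage of the triangle forces $r$ to be at least the covering radius of the triangle; for the fat (acute) triangles favored by the MIAM criterion this covering radius is exactly the circumradius $R$, so I take the tight case $r=R$, in which all three circles pass through the circumcenter $O$. The intra-region overlap is the subset of the triangle covered by at least two disks, and its ratio is that area divided by the triangle area $T$. Since rescaling the triangle rescales $r=R$ and every area by the same factor, this ratio is scale-invariant and depends only on the angles $(\alpha,\beta,\gamma)$; I therefore fix $R$ and track the ratio as a function of shape, with the longest edge corresponding through $a_{\max}=2R\sin\theta_{\max}$ to the largest angle $\theta_{\max}$.

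First I would decompose the overlap into the three pairwise lens regions $D_i\cap D_j$ clipped to the triangle. The lens of two radius-$r$ disks whose centers are a distance $d$ apart has area $2r^2\cos^{-1}\!\big(\tfrac{d}{2r}\big)-\tfrac{d}{2}\sqrt{4r^2-d^2}$, a strictly decreasing function of $d$. Using the inscribed-angle relation $d=2R\sin\theta$ for the edge opposite angle $\theta$, and $r=R$, this reduces to $R^2\big(\pi-2\theta-\sin 2\theta\big)$. Because $r=R$, each such lens has $O$ as one of its two cusps and is bisected by the edge it straddles; its inner half is the part lying on the triangle side. I would then argue that for an acute triangle this inner half lies entirely inside the triangle and that the three disks have no common interior point (the triple overlap is empty, since three circles through the single point $O$ cannot bound a common interior there). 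Granting this, the intra-region overlap is exactly the sum of the three inner half-lenses.

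Summing and using $\alpha+\beta+\gamma=\pi$ together with the inscribed-triangle area identity $T=\tfrac{R^2}{2}\big(\sin2\alpha+\sin2\beta+\sin2\gamma\big)$, the overlap area becomes $\tfrac{R^2}{2}\big(\pi-S\big)$ with $S:=\sin2\alpha+\sin2\beta+\sin2\gamma$, so that the intra-region overlap ratio collapses to the clean expression
\begin{equation}
\kappa_{\mathrm{intra}}=\frac{\pi-S}{S}=\frac{\pi}{S}-1 .
\end{equation}
Minimizing the overlap ratio is thus equivalent to maximizing $S$. Since $\sin 2\theta$ is concave on $[0,\tfrac{\pi}{2}]$, the function $S$ is Schur-concave under the constraint $\alpha+\beta+\gamma=\pi$ over acute triangles, so $S$ strictly increases whenever the triangle is made more balanced, in particular whenever $\theta_{\max}$, and hence the longest edge $2R\sin\theta_{\max}$, is decreased while coverage remains seamless. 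Therefore $\kappa_{\mathrm{intra}}$ strictly decreases as the longest edge shortens, attaining its minimum at the equilateral configuration, exactly the direction the DT's maximized-minimum-angle property pushes toward.

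I expect the main obstacle to be the geometric bookkeeping in the decomposition step rather than the final monotonicity: namely, rigorously justifying that each pairwise lens is split into two congruent halves by its edge, that the inner half is not additionally clipped by the other two edges, and that the triple overlap contributes zero area. These facts hold cleanly for acute triangles but must be checked where an edge is long (the lens cusp $O$ approaches the opposite edge) and near the right-triangle boundary of the acute regime. I would handle this by restricting to the strictly acute triangles guaranteed by the MIAM criterion and bounding the clipped or doubly counted area, which vanishes there; the concavity/majorization argument for $S$ is then routine.
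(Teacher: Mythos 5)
Your computation is a genuinely different and far more exact route than the paper's, but the final monotonicity step has a real gap. The paper's proof never touches lenses at all: it sets the radiation radius to half the longest edge, $R=0.5L$ (not the circumradius you use), sums the three circular \emph{sectors} cut off by the triangle's interior angles to get a total covered area $S=\sum_i n_i\pi R^2/360=\tfrac{1}{2}\pi R^2$ (using $n_1+n_2+n_3=180^\circ$), and defines the overlap ratio as $\zeta=(S-S_0)/S_0=\pi L^2/(8S_0)-1$, which is trivially increasing in $L$ at fixed triangle area $S_0$. The crudeness is the point: because the sector sum depends only on $R$ and not on how the angles are distributed, the resulting ``overlap'' is a function of $L$ and $S_0$ alone, and monotonicity is immediate. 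Your exact expression $\kappa_{\mathrm{intra}}=(\pi-S)/S$ with $S=\sin2\alpha+\sin2\beta+\sin2\gamma$ is correct as a computation (granting the clipping and empty-triple-overlap claims for acute triangles), but it depends on all three angles, not on the longest edge alone, and that is fatal to the stated conclusion.

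Concretely, Schur-concavity of $S$ tells you that $S$ increases under balancing (Robin Hood) transfers of angle, but ``decrease the largest angle'' is not such a transfer. Take the acute triangles $(80^\circ,60^\circ,40^\circ)$ and $(79^\circ,79^\circ,22^\circ)$: the second has the smaller $\theta_{\max}$, hence at your fixed $R$ the shorter longest edge, yet $S$ drops from about $2.19$ to about $1.44$, so your $\kappa_{\mathrm{intra}}$ jumps from about $0.43$ to about $1.18$ -- the opposite of the claimed monotonicity. Renormalizing to fixed area instead of fixed $R$ does not rescue the argument: one can exhibit two acute triangles with equal area and equal longest edge but different $S$ (e.g.\ near $(68.5^\circ,68.5^\circ,43^\circ)$ versus $(70^\circ,65^\circ,45^\circ)$), so under your model the overlap ratio is not even a well-defined function of the longest edge, let alone monotone in it. To salvage your route you would have to either weaken the conclusion to ``the equilateral shape minimizes the ratio'' (which your majorization argument does prove, and which is arguably what the MIAM criterion actually buys), or adopt the paper's coarser accounting in which the covered area is the angle-independent sector sum $\tfrac{1}{2}\pi R^2$ tied to $L$.
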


\begin{proof}
	\label{proof1}
	Let $S_{0}$, and $L$ denote the area and the length
	of the longest edge of the triangle, respectively. For a given triangle-shaped region,
	$S_{0}$ remains constant.
	
	To ensure seamless coverage of the triangle-shaped region, the transmission power of the antenna is determined
	by the length of the longest edge $L$. Because the radiation radius of three TBSs located at the vertices of a triangle are the same, this radius can be denoted as $R=0.5L$. Then, the total coverage area of all corresponding antennas can be calculated
	as
	
	\begin{equation}
		S=\sum_{i=1,2,3}\frac{n_{i}\pi R^{2}}{360},
	\end{equation}
	where $n_{1}$, $n_{2}$, and $n_{3}$ are the inner angles of the
	triangle. Due to $n_{1}+n_{2}+n_{3}=180^{\text{\textdegree}}$,
	we have $S=0.5\pi R^{2}$. As $L$ decreases, $R$ and $S$ also decrease. 
	
	Therefore, we can conclude that the coverage overlap ratio, which is given as $\zeta=(S-S_{0})/S_{0}$, also decreases as the length of the longest edge of the triangle decreases.
\end{proof}

The DT can minimize the length of the longest edge of the triangle
statistically \cite{Okabe-tessellations}. Based on the conclusion
of Proposition \ref{proposition: overlap minimize}, the DT-based
coverage structure can minimize the intra-region overlap statistically.
With the DT-based coverage structure, all TBSs in a cellular network are
clustered into multiple TBS sets, and each set contains three TBSs. The
plane enclosed by three TBSs in each set can be vertically extended
into a TP-shaped airspace.

%Note that Delaunay triangulation always can be defined if there are
%more than three points, $|\Pi_{\textrm{BS}}|\ge3$. Once the
%locations of the TBSs in current networks are known, the triangle-shaped regions will
%be uniquely constructed based on the DT.

\subsection{Space Layered Beam Cooperative Coverage Algorithm}

In the paper \cite{Robust-Planning}, the authors proposed a randomized optimization method designed for clustered networks. Drawing inspiration from this, we have devised a Particle Swarm Optimization (PSO)-based algorithm specifically tailored for each clustered network. In the TP-shaped airspace, the PSO-based SLBC coverage algorithm for obtaining optimal beam parameters is proposed. The SLBC algorithm can tackle the challenge arising from the coexistence of continuous and discrete variables. As shown in Fig. \ref{fig:SLBS-coverage},
three beams of $\{\textrm{BS}_{1},\textrm{BS}_{2},\textrm{BS}_{3}\}$
implement cooperative airspace coverage. For the existing antenna, the H-HPBW
and V-HPBW are discrete while the tilt angle is continuous \cite{Francisco-5G-pattern}.
Especially, the beam pattern determines the H-HPBW and V-HPBW. In the SLBC, two groups of particle swarms work separately.
The discrete one $\varXi_{1}$ corresponds to the beam pattern while
the continuous one $\varXi_{2}$ corresponds to the tilt angle.
Both groups search for solutions independently and share the same
fitness function (\ref{eq:objetive-function}).

\begin{figure}[t]
	\centering
	\includegraphics[width=0.43\textwidth]{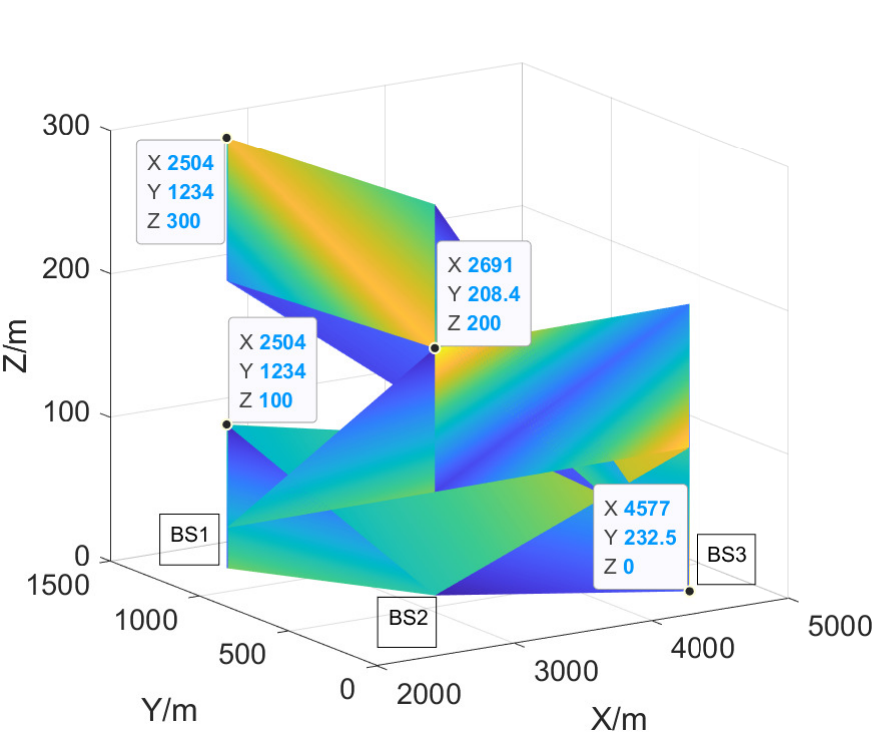}
	\caption{The diagram of cooperative beam coverage. }
	\label{fig:SLBS-coverage}
\end{figure}

To explore the solution, $N$ particles
search in the $M$-dimensional solution space. During the $l-$th
iteration, the tilt angle optimization task in swarm $\varXi_{1}$
involves updating the velocity and position of each particle according
to (\ref{eq:CPSO update}), 

\begin{equation}
	\begin{cases}
		V_{m}(l\!\!+\!\!1)\!\!=\!\!wV_{m}(l)\!\!+\!\!c_{1}F_{1}(S_{L}^{*}(l)\!\!-\!\!B_{m}(l))\\
		\ \ \ \ \ \ \ \ \ \ \ \ \ \ +c_{2}F_{2}(S_{G}^{*}(l)-B_{m}(l)),\\
		B_{m}(l\!\!+\!\!1)\!\!=\!\!B_{m}(l)\!\!+\!\!V_{m}(l\!\!+1),m\in[1,M],
	\end{cases}\label{eq:CPSO update}
\end{equation}
where $w$ denotes the inertia weight, $S_{L}^{*}(l)$ denotes the
local best position, and $S_{G}^{*}(l)$ denotes the global best position. Moreover, $F_{1}$ and $F_{2}$ are a pair of random coefficients. $c_{1}$
and $c_{2}$ denote the local and global learning coefficients, respectively. The learning coefficient controls the step size of
	particle position updates based on individual experience and group
	cooperation. The inertia weight reflects the tendency of particles
	to maintain their current velocity. Generally, the inertia weight
	$w$ gradually decreases during iterations to encourage particles
	to converge throughout the search process, as given by $w=w_{\mathrm{max}}-(l-1)\times(w_{\mathrm{max}}-w_{\mathrm{min}})/(N_{\mathrm{iter}}-1)$,
	where $N_{\mathrm{iter}}$ denotes the total number of iterations,
	$w_{\mathrm{max}}$ and $w_{\mathrm{min}}$ denote the bounds of inertia
	weight.

In addition, each particle in swarm $\varXi_{2}$ performing the beam
pattern optimization task requires a discrete version of iteration
to update their velocity and position, following equation (\ref{eq:DPSO update}),

\begin{equation}
	\begin{cases}
		G_{m}(l\!\!+\!\!1)\!\!=\!\!gG_{m}(l)\!\!+\!\!d_{1}F_{1}(S_{L}^{*}(l)\!\!-\!\!X_{m}(l))\\
		\ \ \ \ \ \ \ \ \ \ \ \ \  \ +d_{2}F_{2}(S_{G}^{*}(l)-X_{m}(l)),\\
		X_{m}(l\!\!+\!\!1)\!\!=\!\!\left\lceil X_{m}(l)\!\!+\!\!G_{m}(l\!\!+\!\!1)\right\rceil,m\!\!\in\!\![1,\!\!M],
	\end{cases}\label{eq:DPSO update}
\end{equation}
where $\left\lceil \;\;\right\rceil $ denotes the operation of rounding
down. Different from the general PSO, the iteration is performed
under the constraint of overlap ratio, as described in (\ref{eq:overlap-constraints}). Each of two particle swarms has the parameter $M=3$, corresponding
to three TBSs. The SLBC
algorithm is summarized in Algorithm \ref{alg:HPSO}.

\begin{algorithm}[t]
	\caption{SLBC coverage algorithm \label{alg:HPSO}}
	
	\small  
	
	\textbf{Input: }Initialize the learning coefficients, the inertia weight of particle swarms.
	
	\textbf{Output:} The optimal GCR.
	
	\begin{algorithmic}[1]
		
		\STATE \textbf{Initialization}
		
		\STATE Randomly initialize the positions, velocity of $N$ particle
		swarms in $\varXi_{1}$ and $\varXi_{2}$.
		
		\STATE Calculate the global best fitness value of all particles according
		to (\ref{eq:objetive-function}).
		
		\STATE Calculate the global best position of particles in $\varXi_{1}$
		and $\varXi_{2}$.
		
		\FOR{$i$ in range $N_{iter}$}
		
		\FOR{$j$ in range $N$}
		
		\STATE Update the velocity and position of $\varXi_{1}$ as
		(\ref{eq:CPSO update}).
		
		\STATE Update the velocity and position of $\varXi_{2}$ as
		(\ref{eq:DPSO update}).
		
		\STATE Calculate the GCR and COR according to (\ref{eq:objetive-function}).
		
		\IF{COR satisfies the constraint (\ref{eq:overlap-constraints})}
		
		\STATE Update local best fitness value of particle swarms.
		
		\STATE Update local best position of particle swarms.
		
		\ENDIF\
		
		\ENDFOR\
		
		\STATE Update the global best fitness value of particle swarms.
		
		\STATE Update the global best position of all particle swarms.
		
		\ENDFOR\
		
		\RETURN The optimal GCR.
		
	\end{algorithmic}
\end{algorithm}

%\begin{proposition} 
%	
%	The computational complexity of the SLBC algorithm is $\mathit{\mathcal{S\times O\mathrm{(|V|)}}}$,
%	where $S$ is the number of the DT-based trangles.
%	
%	\label{proposition: complexity}
%	
%\end{proposition}
%
%\begin{proof}
%	The computational complexity of the SLBC algorithm
%	for any TP-shaped airspace is the same, which is noted by $\mathcal{\mathit{\mathcal{O\mathrm{(|V|)}}}}$.
%    A airspace serving a scale network is divided $S$ TP-shaped airspaces.
%	The computational complexity of the whole network is $\mathit{\mathcal{S\times O\mathrm{(|V|)}}}$.
%\end{proof}
%
%Based on the conclusion of Proposition 2, it is concluded that the
%computational complexity of the SLBC algorithm increases linearly
%with the size of the network.

\subsection{Adaptive Beam Cooperative Coverage Algorithm}

To track the challenge posed by the limited number of beam patterns,
a novel algorithm is required to improve the coverage of arbitrary triangular-shaped airspace. Additionally, it is crucial to prevent power leakage
into adjacent TP-shaped spaces. Thus, the ABC
algorithm is designed to determine the optimal and continuous H-HPBW, V-HPBW, and tilt angle.

Considering that the ABC algorithm originates from the PSO, the particles
search for the solution in $M$-dimensional solution space $S=\{\varOmega_{1},\:\varOmega_{2},\:\varOmega_{3},\:V_{1},\:V_{2},\:V_{3},\:\theta_{1},\:\theta_{2},\:\theta_{3}\}$,
where $\varOmega_{1},\varOmega_{2},\varOmega_{3}$ denote continuous H-HPBW, $V_{1},V_{2},V_{3}$ denote continuous V-HPBW, and $\theta_{1},\theta_{2},\theta_{3}$
denote the continuous tilt angle of three TBSs, respectively. In addition, the flow of the ABC
algorithm is similar to that of the general PSO algorithm.

\section{NUMERICAL RESULTS\label{sec:NUMERICAL-RESULT}}

In this section, we present simulations and field trials to evaluate
the performance of the CoTP-based G2A coverage extension method.

\subsection{Simulation Results}

In the simulations, the environment parameters are set as $h_{max}=300\textrm{m}$,
$\tau=-90\textrm{dBm}$, and $P_{\textrm{T}}=46\textrm{dBm}$ \cite{TR36.777}. The learning coefficients and inertia weights are set as $c_{1}=d_{1}=1.5$, $c_{2}=d_{2}=2.5$, $w_{\mathrm{min}}=0.4$
	, and $w_{\mathrm{max}}=0.9$. $a_{1},a_{2},$ and $a_{3}$ denote three inner angles
of a given triangle. The coverage ratio obtained in a TP airspace is treated as the optimal
GCR. $v_{1}$ and $v_{2}$ are used to denote the optimal GCR based
on the SLBC algorithm and ABC algorithm, respectively. In order to
evaluate the G2A coverage extension in large-scale networks, the average GCR is defined as $\varrho=\sum_{t=1}^{\mathsf{D}}b_{t}s_{t}/\sum b_{t},$
where $b_{t}$ and $s_{t}$ denote the area of each triangle-shaped region and the optimal GCR of each TP-shaped airspace.  D represents the number of DT-based triangular
regions.

%Table \ref{tab:The-Coverage-Ratio} describes the simulation results
%of a regular network, where all TBSs are placed at the vertices of
%a regular triangular-shaped region. As shown in Table \ref{tab:The-Coverage-Ratio}, the average GCR of the regular
%network can reach up to 99\%. 
%
%%\begin{table}[tbh]
%\begin{table}[t]
%	\centering{}%
%	\caption{The coverage performance of regular network.
	%	\label{tab:The-Coverage-Ratio}}
%	\begin{tabular}{|c|c|c|c|c|c|}
	%		\hline 
	%		Deployment & $\alpha_{1}$ & $\alpha_{2}$ & $\alpha_{3}$ & $v_{1}$ & $v_{2}$\tabularnewline
	%		\hline 
	%		\hline 
	%		1 & 60 & 60 & 60 & 0.88 & 0.99\tabularnewline
	%		\hline 
	%		2 & 90 & 45 & 45 & 0.87 & 0.99\tabularnewline
	%		\hline 
	%	\end{tabular}
%\end{table}

We simulate the GCR of a synthetic network that follows
the configuration of a real-world operator's network, as shown in Fig.
\ref{fig:Test-Environment}. The simulation area comprises 9 TBSs,
covering a 15 square kilometer region. The results are depicted in
Table \ref{tab:The-Coverage-Ratio-1}. When the COR is 0.01\%, the average
GCR with the SLBC algorithm can reach 86\%, while the average GCR
with the ABC algorithm is over 93\%.
The average GCR of the ABC algorithm is higher than that of the SLBC
algorithm. The advantage of the
ABC algorithm is derived from its continuous adjustment of the H-HPBW and V-HPBW parameters.

%\begin{table}[tbh]
\begin{table}[t]
	\begin{centering}
		\caption{The performance of the CoTP-based coverage extension method.\label{tab:The-Coverage-Ratio-1}}
		\par\end{centering}
	\centering{}%
	\begin{tabular}{|c|c|c|c|c|c|c|}
		\hline 
		TBS set & $\alpha_{1}$ & $\alpha_{2}$ & $\alpha_{3}$ & $b$ & $v1$ & $v2$\tabularnewline
		\hline 
		\hline 
		$\mathrm{X_{1}X_{2}X_{7}}$ & 5 & 171 & 5 & 0.61 & 0.51 & 0.84\tabularnewline
		\hline 
		$\mathrm{X_{1}X_{2}X_{4}}$ & 140 & 15 & 25 & 0.67 & 0.85 & 0.95\tabularnewline
		\hline 
		$\mathrm{X_{1}X_{4}X_{5}}$ & 25 & 121 & 34 & 0.66 & 0.80 & 0.98\tabularnewline
		\hline 
		$\mathrm{X_{1}X_{5}X_{6}}$ & 34 & 35 & 111 & 2.22 & 0.88 & 0.95\tabularnewline
		\hline 
		$\mathrm{X_{2}X_{3}X_{4}}$ & 27 & 54 & 99 & 0.97 & 0.9 & 0.95\tabularnewline
		\hline 
		$\mathrm{X_{2}X_{3}X_{7}}$ & 23 & 82 & 75 & 1.36 & 0.91 & 0.94\tabularnewline
		\hline 
		$\mathrm{X_{3}X_{4}X_{5}}$ & 70 & 74 & 36 & 1.56 & 0.87 & 0.94\tabularnewline
		\hline 
		$\mathrm{X_{5}X_{6}X_{9}}$ & 65 & 75 & 40 & 4.62 & 0.87 & 0.92\tabularnewline
		\hline 
		$\mathrm{X_{6}X_{8}X_{9}}$ & 61 & 66 & 53 & 2.5 & 0.9 & 0.93\tabularnewline
		\hline 
	\end{tabular}
\end{table}

The simulation results using a random plane division method are presented
in Table \ref{tab:The-Coverage-Ratio-2}. Comparing the results
in Table \ref{tab:The-Coverage-Ratio-1} and Table \ref{tab:The-Coverage-Ratio-2},
it is observed that the average GCR of the DT-based coverage structure
is 86\%, whereas the coverage structure based on the random plane division
achieves an average GCR of 78\% with the SLBC algorithm. Therefore,
it can be concluded that the DT-based coverage structure is better
than the coverage structure based on the random plane division
method. With the ABC algorithm, the conclusion still holds. 

\begin{table}[t]
	\begin{centering}
		\caption{The performance of coverage extension method based on random division.\label{tab:The-Coverage-Ratio-2}}
		\par\end{centering}
	\centering{}%
	\begin{tabular}{|c|c|c|c|c|c|c|}
		\hline 
		TBS set & $\alpha_{1}$ & $\alpha_{2}$ & $\alpha_{3}$ & $b$ & $v1$ & $v2$\tabularnewline
		\hline 
		\hline 
		$\mathrm{X_{3}X_{5}X_{7}}$ & 172 & 4 & 4 & 0.43 & 0.55 & 0.88\tabularnewline
		\hline 
		$\mathrm{X_{2}X_{5}X_{7}}$ & 141 & 20 & 19 & 2.14 & 0.81 & 0.94\tabularnewline
		\hline 
		$\mathrm{X_{1}X_{2}X_{7}}$ & 5 & 171 & 5 & 0.61 & 0.51 & 0.84\tabularnewline
		\hline 
		$\mathrm{X_{2}X_{4}X_{5}}$ & 33 & 100 & 47 & 1.31 & 0.90 & 0.94\tabularnewline
		\hline 
		$\mathrm{X_{1}X_{2}X_{4}}$ & 140 & 15 & 25 & 0.67 & 0.85 & 0.95\tabularnewline
		\hline 
		$\mathrm{X_{1}X_{4}X_{5}}$ & 25 & 121 & 34 & 0.66 & 0.80 & 0.98\tabularnewline
		\hline 
		$\mathrm{X_{5}X_{8}X_{9}}$ & 19 & 35 & 126 & 3.47 & 0.80 & 0.86\tabularnewline
		\hline 
		$\mathrm{X_{1}X_{5}X_{8}}$ & 101 & 55 & 24 & 4.89 & 0.79 & 0.87\tabularnewline
		\hline 
		$\mathrm{X_{1}X_{6}X_{8}}$ & 11 & 159 & 10 & 0.99 & 0.75 & 0.89\tabularnewline
		\hline 
	\end{tabular}
\end{table}

To illustrate the performance of the SLBC algorithm and ABC algorithm,
a simulation is conducted in the triangle-shaped region enclosed by
${\mathrm{X_{2},X_{3},X_{4}}}$, where three inner angles are $a_{1}=99^{\text{\textdegree}},\ensuremath{a_{2}}=5\ensuremath{4^{\text{\textdegree}}},a_{3}=27^{\text{\textdegree}}$. 

Fig. \ref{fig:SLBC-CE} depicts the performance of the SLBC algorithm. The simulation results demonstrate that the SLBC algorithm can rapidly
converge, and the optimal GCR can reach 90\% while the COR is 0.01\%.
Moreover, the optimal tilt angle, H-HPBW, V-HPBW for three TBSs
located in the TP-shaped airspace are $\{-52^{\circ},15^{\circ},25^{\circ}\}$,
$\{-13^{\circ},110{}^{\circ},25{}^{\circ}\}$, $\{9^{\circ},110^{\circ},25^{\circ}\}$,
respectively. The tilt angles indicate the subspace in the TP-shaped airspace
that each antenna is responsible for. The optimal parameters obtained in the simulation provide valuable guidance for G2A coverage extension. However, it is worth
noting that the beam power may leak to adjacent TP-shaped airspace
since the H-HBPW value of $110^{\circ}$ exceeds $a_{2}$ and
$a_{3}$. The result means that the width of the beam is wider than
the given region. 

\begin{figure}[t]
	\centering
	\includegraphics[width=0.42\textwidth]{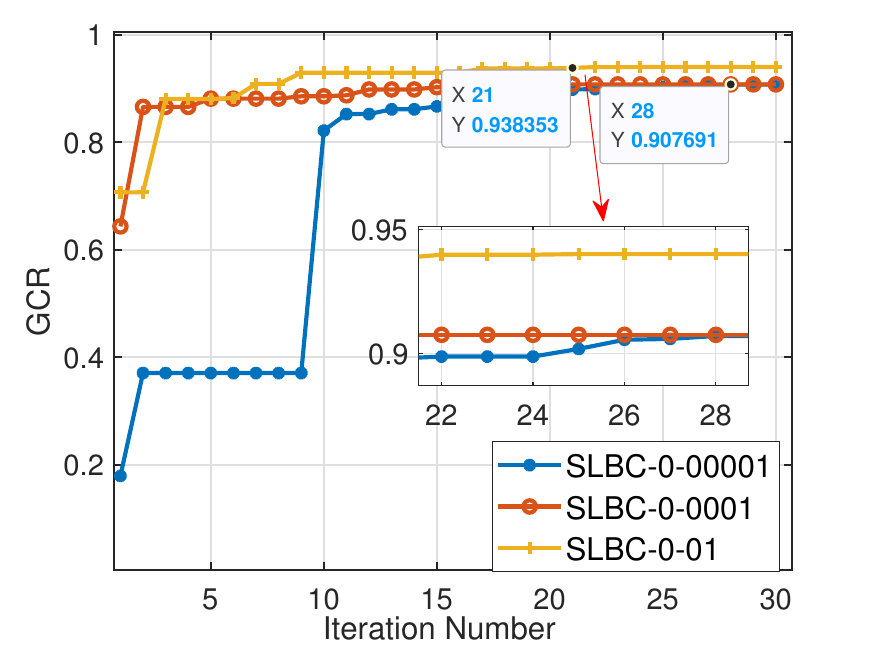}
	\caption{The convergence of the SLBC algorithm. SLBC-0.00001, SLBC-0.0001, and SLBC-0.01 represent the cases where the COR values are 0.00001, 0.0001, and 0.01, respectively.}
	\label{fig:SLBC-CE}
\end{figure}

The simulation results in Fig. \ref{fig:SB-CE} demonstrate that the ABC algorithm can rapidly
converge, and the optimal GCR can reach 95\% while the COR is 0.01\%.
Moreover, the optimal tilt angle, H-HPBW, V-HPBW of three TBSs
are $\{-80^{\circ},77^{\circ},112^{\circ}\}$, $\{-21^{\circ},35{}^{\circ},41{}^{\circ}\}$,
$\{14^{\circ},27^{\circ},129^{\circ}\}$, respectively. Note
that the H-HBPW values of all TBSs are smaller than the width of the given
region. This ensures that the main lobe of the antenna beam
falls within the triangular coverage region and avoids power leakage
to adjacent TP airspace.

\begin{figure}[t]
	\centering
	\includegraphics[width=0.42\textwidth]{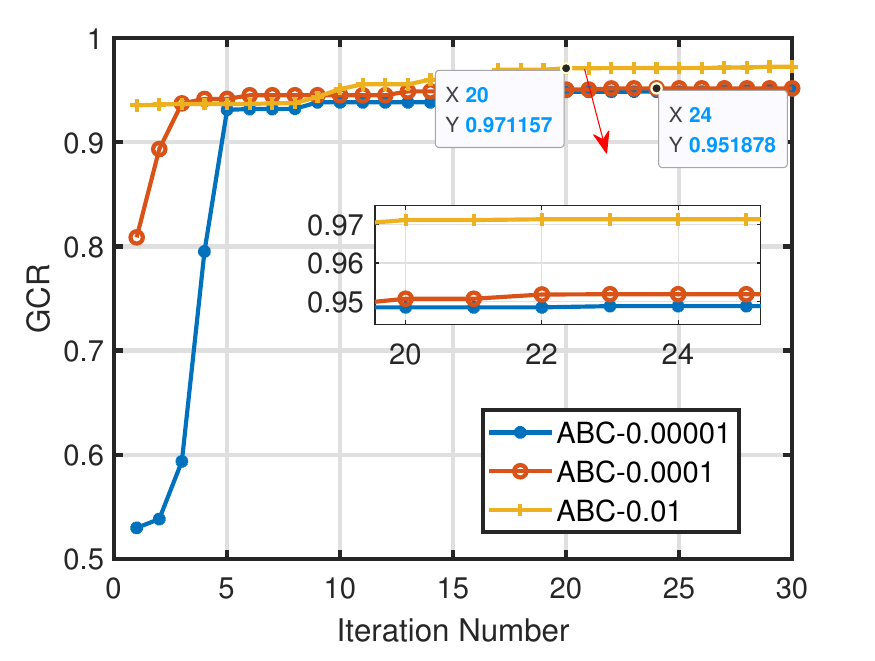}
	\caption{The convergence of the ABC algorithm. ABC-0.00001, ABC-0.0001, and ABC-0.01 represent the cases where the COR values are 0.00001, 0.0001, and 0.01, respectively.}
	\label{fig:SB-CE}
\end{figure}

Furthermore, we assess the GCR of the ABC and SLBC by comparing them to the exhaustive search (ES) algorithm and the algorithm that utilizes the 5G beam without 3D cooperation (5G Beam W/O) \cite{Francisco-5G-pattern}.

\begin{figure}[t]
	\centering
	\includegraphics[width=0.43\textwidth]{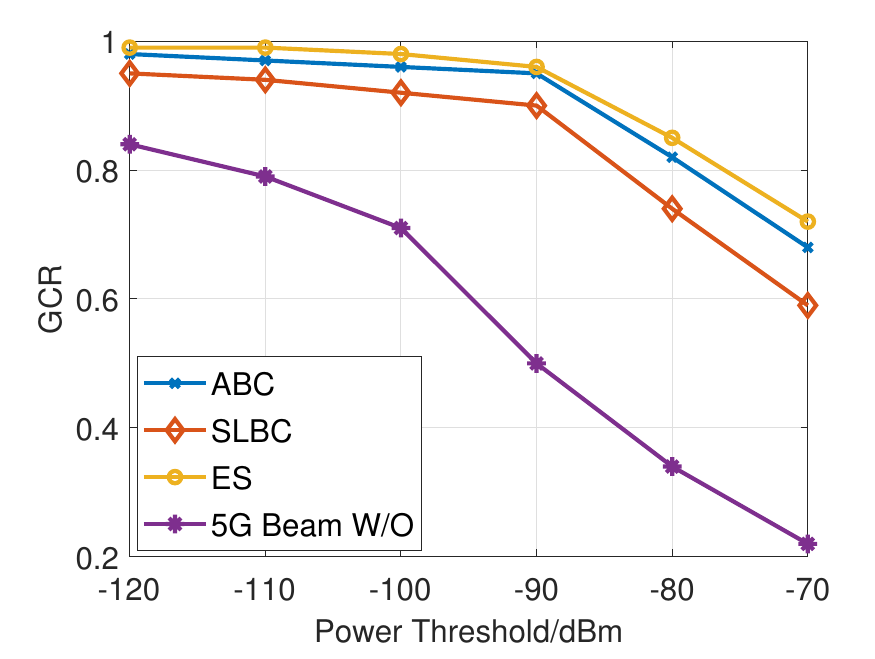}
	\caption{The coverage for airspace with varying received power thresholds.}
	\label{fig:Different-Threshold}
\end{figure}

The GCR under different received power thresholds is shown
	in Fig. \ref{fig:Different-Threshold}. It is observed that as the threshold increases, the GCR decreases. Furthermore, the GCR of the ABC closely aligns with that of the ES algorithm, exceeding
	the GCR of the 5G beam-based algorithm without cooperation.

\begin{figure}[t]
	\centering
	\includegraphics[width=0.43\textwidth]{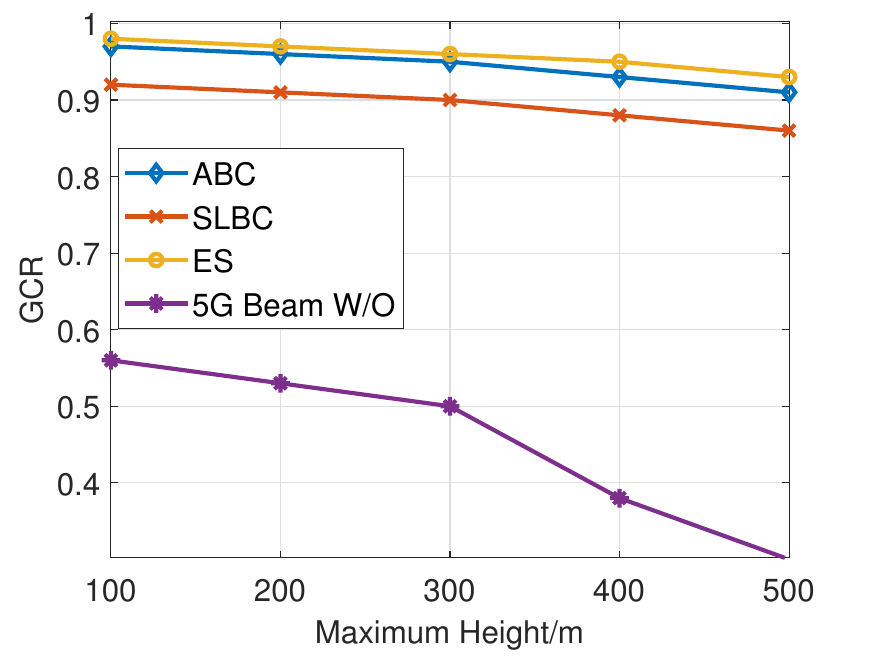}
	\caption{The coverage for airspace with varying maximum heights.}
	\label{fig:Different-Height}
\end{figure}

The GCR under varying maximum heights
	is shown in Fig. \ref{fig:Different-Height}. It should be noted that an increase in the maximum
	height of the covered airspace results in a slight decline in the GCR.
	Nonetheless, the proposed algorithm can maintain robust 3D coverage performance even as height increases. This suggests that the proposed algorithms obtain robust adaptability to variations
	in coverage height.

%The coverage of 3D airspace with different maximum height is shown
%in Fig. \ref{fig:Different-Height}. It is worth noting
%that as the maximum height of coverage airspace increases, the optimal
%GCR decreases. The reason is that the height of G2A coverage extension
%is constrained by the transmission power of TBS. Moreover, it was observed that the maximum altitude of G2A
%coverage extension increases as the transmission power of TBS increases
%for a given GCR.

%\begin{figure}[t]
%	\centering
%	\includegraphics[width=0.45\textwidth]{Different_height.eps}
%	\caption{The coverage for airspace with varying maximum heights. The line labeled as 30dB, 46dB, and 60dB indicates the transmission power levels of 30 dBmW, 46 dBmW, and 60dBmW, respectively.}
%	\label{fig:Different-Height}
%\end{figure}

\subsection{Field Trials}

The field trials are conducted in a typical suburban region in Zigong,
a city located in southwest China. The networks include
9 TBSs, covering a region of 15 square kilometers. The network topology is shown in Fig. \ref{fig:Test-Environment},
with $\{\mathrm{X}_{1},\mathrm{X}_{2},\cdots,\mathrm{X}_{9}\}$ representing the
locations of the 9 TBSs. Multiple triangular-shaped regions have been
designed using the DT-based coverage structure generating method.

%\begin{figure}[t]
%	\centering
%	\includegraphics[width=0.45\textwidth]{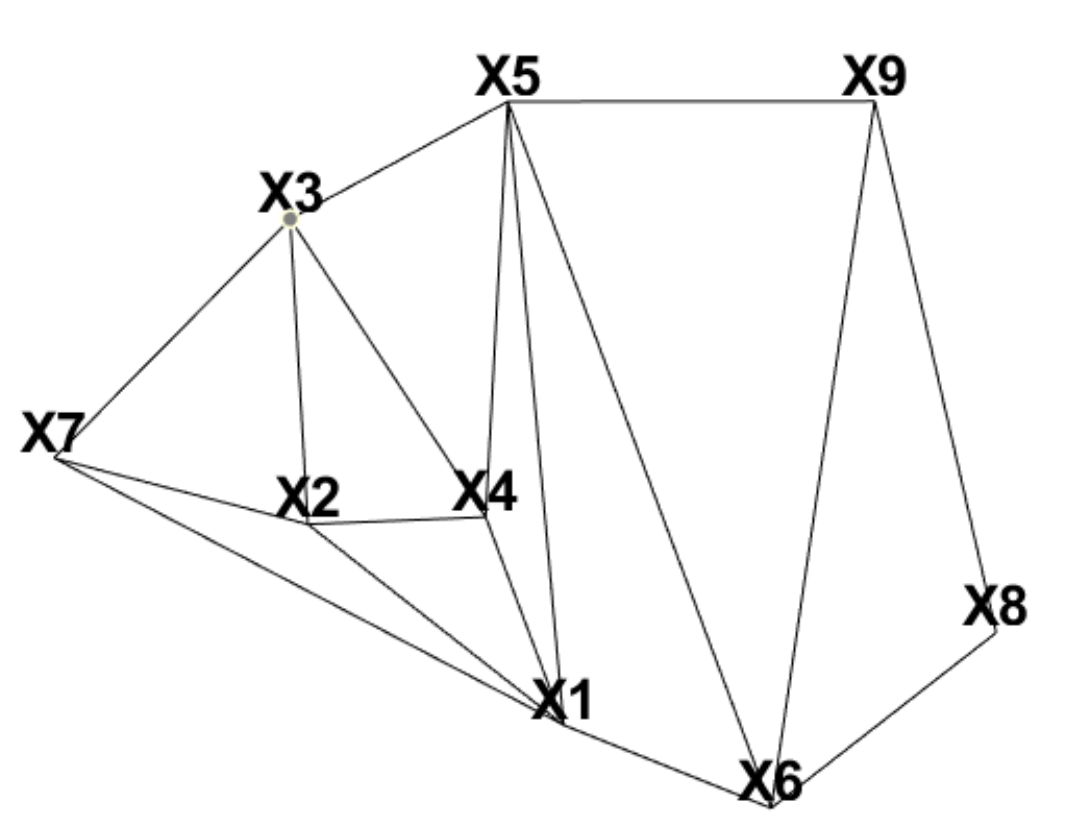}
%	\caption{The TBSs deployment topology for field trial.}
%	\label{fig:Test-Environment}
%\end{figure}

%\begin{figure}[t]\ContinuedFloat
\begin{figure}[t]
	\centering
	\subfloat[\label{fig:Test-Environment}The TBSs deployment topology.]{\includegraphics[width=.5\linewidth]{voronoi.eps}}
	\subfloat[\label{fig:UAV-and-drive-1}The drive-test equipment.]{\includegraphics[width=.5\linewidth]{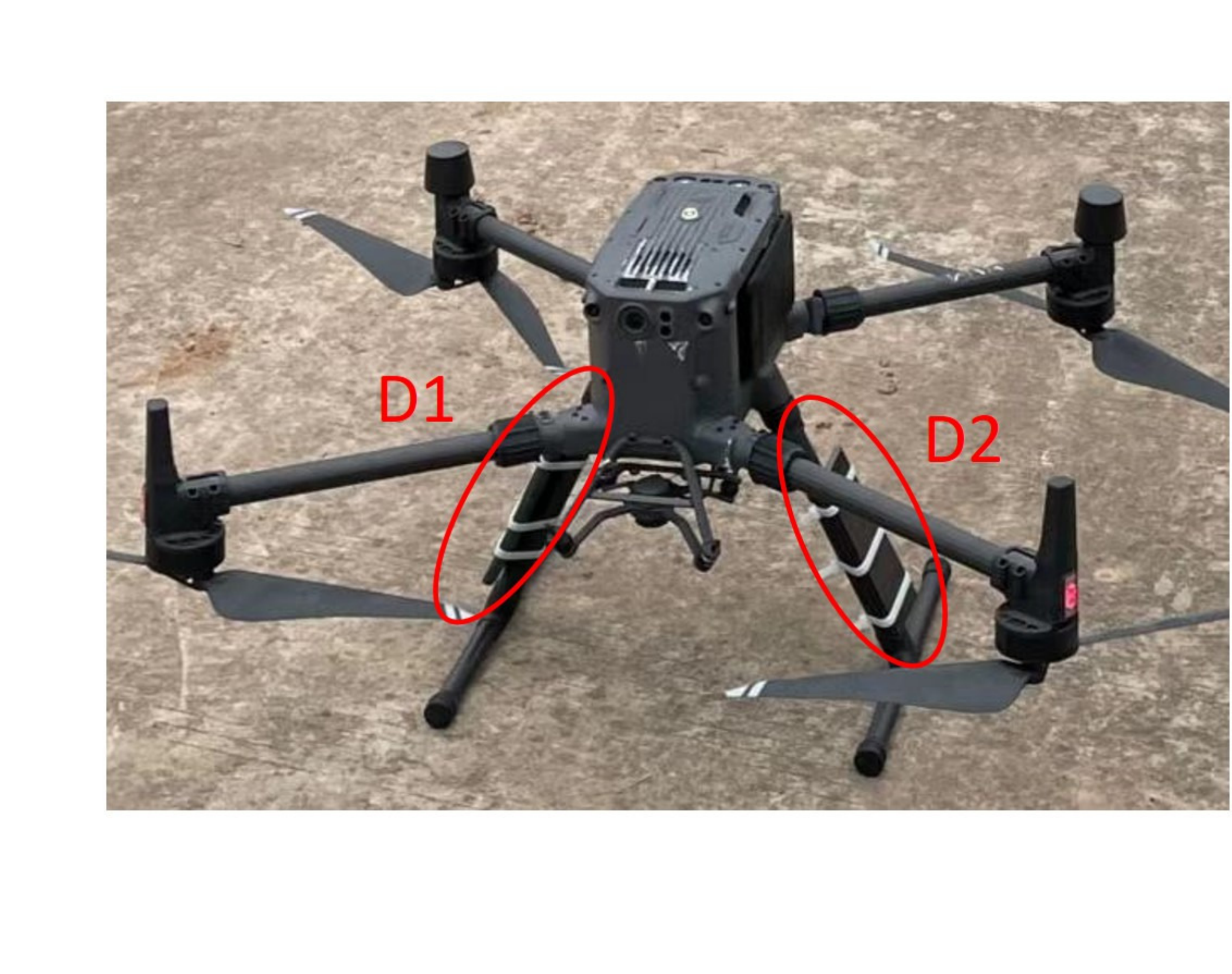}} \\
	\subfloat[\label{fig:UAV terminal}UAV control terminal.]{\includegraphics[width=.5\linewidth]{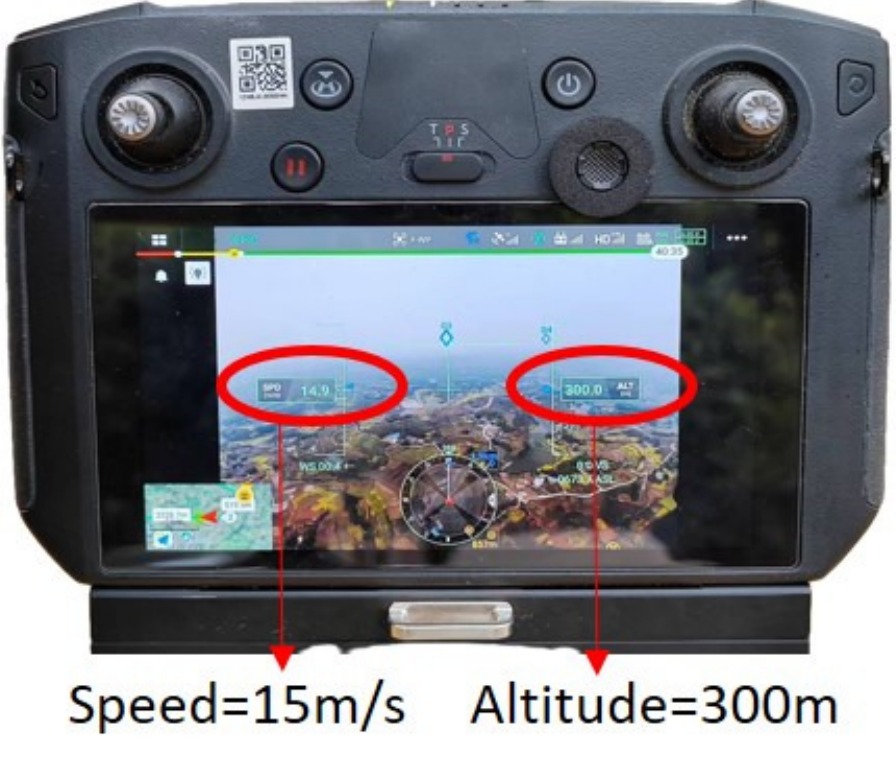}}
	\subfloat[\label{fig:UAV performing tasks}UAV performing tasks.]{\includegraphics[width=.5\linewidth]{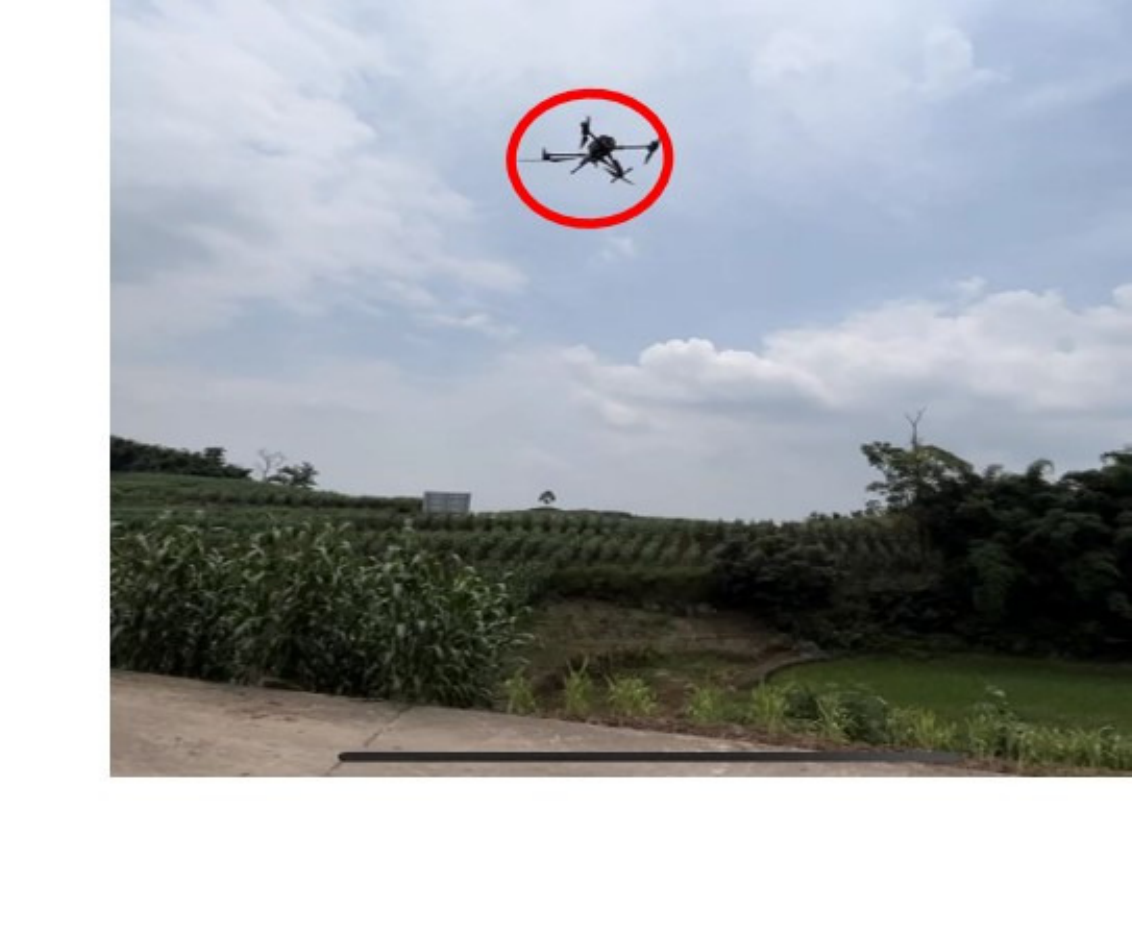}}	
	\caption{The field trial.}
	\label{fig:arms}
\end{figure}

%\begin{figure}[t]
%	\centering{}\subfloat[The TBSs deployment topology.]{
	%		
	%		\includegraphics[width=0.25\textwidth]{voronoi.eps}
	%		\label{fig:Test-Environment}
	%		
	%	} \subfloat[UAV and drive test terminal]{
	%		
	%	  \includegraphics[width=0.25\textwidth]{field3.eps}
	%	  \label{fig:UAV-and-drive-1}
	%		
	%	}\caption{The enviroment of field trial.}
%\end{figure}

%\begin{figure}[t]
%	\centering
%	\includegraphics[width=0.45\textwidth]{field3.eps}
%	\caption{UAV and drive test terminal.}
%	\label{fig:UAV-and-drive-1}
%\end{figure}

%\begin{figure}[t]
%	\centering
%	\includegraphics[width=0.45\textwidth]{field2.eps}
%	\caption{Real-time image and trajectory of UAV.}
%	\label{fig:Flight-trajectory-1}
%\end{figure}

The field trials are implemented by a UAV and
two drive-test terminals (see Fig. \ref{fig:UAV-and-drive-1}). The gathered samples from two drive-test terminals are labeled D1 and
D2, respectively.
% (\textcolor{blue}{see Fig. }\ref{fig:UAV-and-drive-1}).
%\textcolor{blue}{The Real-time transmission of image and trajectory of UAV is shown in Fig. \ref{fig:Flight-trajectory-1}. }
On the drive-test terminals, the Huawei AAU5613 5G antennas are employed. The beam patterns are configured according to the 5G antenna mode \cite{Francisco-5G-pattern}. Due to operational limitations, we restrict the range of tilt
angle to $[-15^{\circ},15^{\circ}]$. The sample tasks are implemented at
different altitudes (Fig. \ref{fig:UAV terminal}, see Fig. \ref{fig:UAV performing tasks}). The detailed parameters
of the field trial are shown in Table \ref{tab:Parameter-of-Measurement}. 

\begin{table}[t]
	\caption{Parameters of the field trial.\label{tab:Parameter-of-Measurement}}
	
	\centering{}%
	\begin{tabular}{|c|c|}
		\hline 
		Parameters & Configuration\tabularnewline
		\hline 
		\hline 
		Propagation environment & Suburban\tabularnewline
		\hline 
		Frequency & 2.6GHz\tabularnewline
		\hline 
		Antenna & 64T$\times$64R\tabularnewline
		\hline 
		Number of aerial samples & 5000$\times$6\tabularnewline
		\hline 
		Speed of UAV & 15m/s\tabularnewline
		\hline 
		Flight altitude & \{75,100,150,200,250,300\}m\tabularnewline
		\hline 
		Sample rate & 1Hz\tabularnewline
		\hline 
	\end{tabular}
\end{table}

The coverage performance comparison between the down-tilted antennas and the
configurations based on the SLBC is shown in Table \ref{tab:GCR-of-test}.
We can observe that the network
with down-tilted antennas achieves an average GCR of around 46.05\%, while
the network with the antenna configurations based on the SLBC algorithm
obtains an average GCR of around 84.56\%. It is concluded that the GCR of the CoTP-based method improves by 83\% compared to a cellular network that uses down-tilted antennas.

\begin{table}[t]
	\begin{centering}
		\caption{The real-world measurement coverage performance comparison between the down-tilted antennas and the SLBC-based method. *-ori denotes the down-tilted antennas. *-opti denotes the
			configurations based on the SLBC-based method. D1 and D2 represent two different drive-test terminals. \label{tab:GCR-of-test}}
		\par\end{centering}
	\centering{}%
	\begin{tabular}{|c|c|c|c|c|}
		\hline 
		Parameters & D1-ori & D2-ori & D1-opti & D2-opti\tabularnewline
		\hline 
		\hline 
		$\mathrm{X}_{2}$-Beam (Pattern) & 2 & 2 & 7 & 7\tabularnewline
		\hline 
		$\mathrm{X}_{2}$-Tilt ($\text{\textdegree}$) & -3 & -3 & 3 & 3\tabularnewline
		\hline 
		$\mathrm{X}_{3}$-Beam (Pattern) & 3 & 3 & 8 & 8\tabularnewline
		\hline 
		$\mathrm{X}_{3}$-Tilt ($\text{\textdegree}$) & -3 & -3 & 7 & 7\tabularnewline
		\hline 
		$\mathrm{X}_{4}$-Beam (Pattern) & 4 & 4 & 9 & 9\tabularnewline
		\hline 
		$\mathrm{X}_{4}$-Tilt ($\text{\textdegree}$) & -1 & -1 & 11.5 & 11.5\tabularnewline
		\hline 
		GCR & 45.87\% & 46.23\% & 84.23\% & 84.89\%\tabularnewline
		\hline 
	\end{tabular}
\end{table}

Fig. \ref{fig:Different-sub-Layer-of} displays the GCR of sub-layer
airspace with different heights, presenting both the simulation results and field trials derived from the CoTP-based method and down-tilted antennas. It is observed that the CoTP-based
method enhances the coverage of each sub-layer of  the total airspace,
with a particular emphasis on ensuring coverage for near-ground
sub-layers, including (0-50)m and (50-100)m.

\begin{figure}[t]
	\centering
	\includegraphics[width=0.45\textwidth]{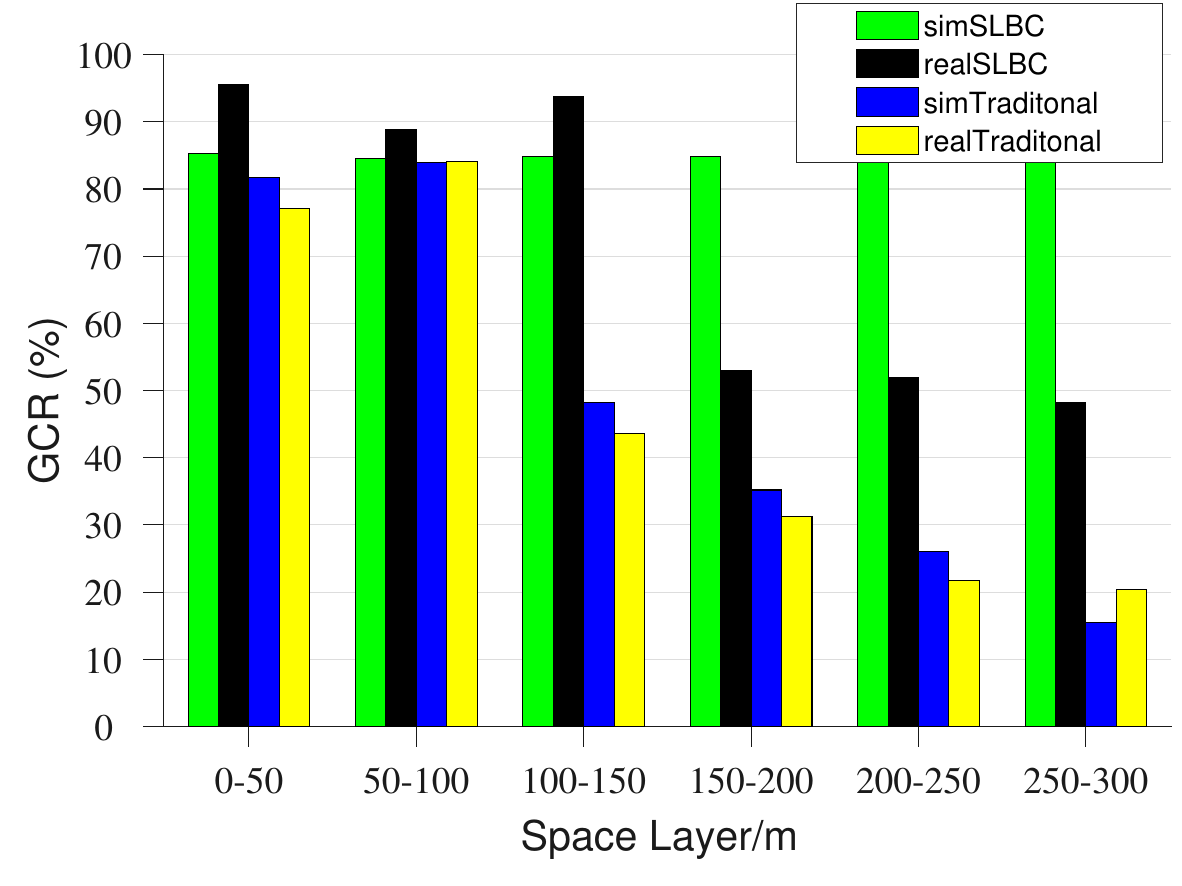}
	\caption{The coverage of varying subspace. The green bars represent the simulated GCR with the CoTP-based method, while the black bars represent the corresponding measurements in the real world. The blue bars represent the simulated GCR with down-tilted antennas, while the yellow bars represent the measurements with down-tilted antennas.}
	\label{fig:Different-sub-Layer-of}
\end{figure}

\section{CONCLUSION\label{sec:CONCLUSION}}

In this paper, we proposed a CoTP-based G2A coverage extension method
for beyond 5G networks. The proposed method aims to maximize the coverage
ratio of 3D airspace while controlling overlap, achieved by designing
the TBSs cooperative set and antenna parameters in existing cellular networks. First, we designed the
CoTP model and the coverage structure based on
Delaunay triangulation. Then, the SLBC algorithm has been designed to obtain
sub-optimal antenna parameters. To further improve the coverage ratio
in 3D airspace, the ABC algorithm has been designed, which can also address
the power leakage problem. Finally, the simulation results and field
trials revealed the effectiveness of the CoTP-based G2A coverage extension
method.

%\section*{ACKNOWLEDGEMENT}
%\label{ACKNOWLEDGEMENT}

\bibliographystyle{IEEEtran}
\bibliography{seamlesscoverage}

%\bibliographystyle{IEEEtran}
%\phantomsection\addcontentsline{toc}{section}{\refname}\bibliography{bib}

\end{document}